\newtheorem{theorem}{Theorem}[section]
\newtheorem{lemma}[theorem]{Lemma}
\newtheorem{definition}{Definition}[section]
\newcommand{\FW}{\texttt{A$^2$-UAV}\xspace}
\newcommand{\greedyprob}{\textsc{Greedy}-\gls{prob}\xspace}
\newcommand{\optprob}{\textsc{Opt}-\gls{prob}\xspace}
\newcommand{\U}{\gls{uav}\xspace}
\newcommand{\Us}{\glspl{uav}\xspace}
\begin{document}

\title{\FW: Application-Aware Content and Network Optimization of Edge-Assisted UAV Systems 
}
\author{\IEEEauthorblockN{Andrea Coletta$^\dagger$, Flavio Giorgi$^\dagger$, Gaia Maselli$^\dagger$, Matteo Prata$^\dagger$, \\ Domenicomichele Silvestri$^\dagger$, Jonathan Ashdown$^\ddagger$, and Francesco Restuccia$^*$}\\\vspace{-0.3cm}
\IEEEauthorblockA{
$^\dagger$Department of Computer Science, Sapienza University of Rome, Italy\\
$^\ddagger$Air Force Research Laboratory, United States\\
$^*$Institute for the Wireless Internet of Things, Northeastern University, United States\\
}
\normalsize Corresponding Author: Andrea Coletta, e-mail: coletta@di.uniroma1.it\vspace{-.5cm}
\thanks{Approved for Public Release; Distribution Unlimited: AFRL-2022-1069.
}
}

\maketitle

\newacronym{mas}{MAS}{Mobile autonomous system}
\newacronym{cv}{CV}{computer vision}
\newacronym{dnn}{DNN}{deep neural network}
\newacronym{dl}{DL}{deep learning}
\newacronym{uav}{UAV}{Unmanned Aerial Vehicle}
\newacronym{ai}{AI}{Artificial Intelligence}
\newacronym{milp}{MILP}{Mixed Integer Linear Programming}
\newacronym{ugv}{UGV}{Unmanned Ground Vehicle}
\newacronym{stba}{STBA}{Steiner-Tree-Based Algorithm}
\newacronym{aoi}{AoI}{Area of Interest}
\newacronym{cnn}{CNN}{Convolutional Neural Network}
\newacronym{fps}{fps}{frames per second}
\newacronym{gpu}{GPU}{Graphics Processing Unit}
\newacronym{prob}{\texttt{A$^2$-TPP}}{Application-Aware Task Planning Problem}
\newacronym{dct}{DCT}{Discrete Cosine Transform}
\newacronym{sta}{\texttt{A$^2$-TA}}{Application-Aware Task Analyzer}

\begin{abstract}
To perform advanced surveillance, \Us require the execution of edge-assisted \gls{cv} tasks. In multi-hop UAV networks, the successful transmission of these tasks to the edge is severely challenged due to severe bandwidth constraints. For this reason, we propose a novel \FW framework to optimize the number of correctly executed tasks at the edge. In stark contrast with existing art, we take an \textit{application-aware} approach  and formulate a  novel \gls{prob} that takes into account (i) the relationship between \gls{dnn} accuracy and image compression for the classes of interest based on the available dataset, (ii) the target positions, (iii) the current energy/position of the \Us to optimize routing, data pre-processing and target assignment for each \U. We demonstrate \gls{prob} is NP-Hard and propose a polynomial-time algorithm to solve it efficiently. We extensively evaluate \FW through real-world experiments with a testbed composed by four DJI Mavic Air 2 \Us. We consider state-of-the-art image classification tasks with four different DNN models (i.e., DenseNet, ResNet152, ResNet50 and MobileNet-V2) and object detection tasks using YoloV4 trained on the ImageNet dataset.  Results show that \FW attains on average around $38\%$  more accomplished tasks than the state of the art, with $400\%$ more accomplished tasks when the number of targets increase significantly. To allow full reproducibility, we pledge to share datasets and code with the research community.
\end{abstract}


\glsresetall


%
%

\section{Introduction}
\Us, or drones, have obtained significant attention thanks to their potential use in post-disaster scenarios, where human intervention is difficult or inefficient due to the vastness and/or harshness of the area.
The key advantage of \Us is the combined presence of advanced sensor equipment, 
wireless multi-hop networking and mobility in the same device, thus enabling critical applications such as automatic target (\emph{e.g.,} object, person) detection and tracking. 

To perform their functions, modern \Us necessarily depend on the execution of computation-heavy \gls{dl} tasks to analyze in real time the images of the target area. These tasks  usually rely on very \glspl{dnn} such as ResNet \cite{he2016deep} and DenseNet \cite{huang2017densely}, which are computationally prohibitive for \Us \cite{messous2019game}. To extend \Us battery lifetime and keep task execution time within acceptable levels, offloading the stream of tasks to neighboring edge servers (\textit{e.g.}, the depot) is a feasible option \cite{scherer2020multi, bartolini2021connected, samir2020age,chen2018multiple, wu2018joint, hosseinalipour2020interference}. Unfortunately,  \textit{\Us networks typically experience limited bandwidth and frequent packet loss} \cite{chuprov2022degrading}. Prior work on \U edge task offloading --- discussed in details in Section \ref{related} --- assumes a single-hop communication between the \Us and the edge \cite{yang2020offloading}\cite{callegaro2021seremas}, or focuses only on networking aspects on a multi-hop communication \cite{chen2020vfc}. All fall short in considering the specific task requirements, which ultimately limits the number of correctly executed tasks. Existing work also rarely uses a testbed to measure performance experimentally.

\begin{figure}[h]
\vspace{-0.1in}
\centering
  \begin{minipage}{0.24\linewidth} 
  \begin{overpic}[width=\textwidth]{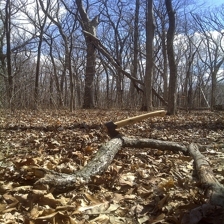} 
   \put (-0.15,0) {{\colorbox{white} A}}
  \end{overpic}
  \end{minipage}  
  \centering
  \begin{minipage}{0.24\linewidth}
    \begin{overpic}[width=\textwidth]{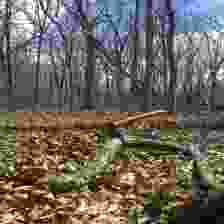} 
   \put (-0.15,0) {{\colorbox{white} B}}
  \end{overpic}
  \end{minipage} 
  \centering
  \vspace{0.04in}
  \begin{minipage}{0.24\linewidth} 
      \begin{overpic}[width=\textwidth]{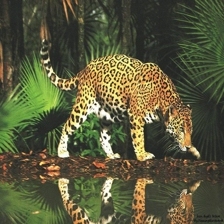} 
   \put (-0.15,0) {{\colorbox{white} C}}
  \end{overpic}
  \end{minipage}  
  \centering
  \begin{minipage}{0.24\linewidth}
        \begin{overpic}[width=\textwidth]{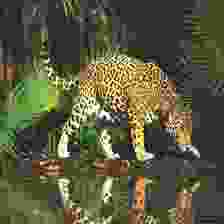}
   \put (-0.15,0) {{\colorbox{white} D}}
  \end{overpic}
  \end{minipage}
  \caption{Examples of application-aware compression. A and C (B and D) represent the images before (after) processing. \vspace{-0.1in}
  }\label{fig:qol_examples}
\end{figure}  

In stark contrast, we propose {\FW}, \emph{an application-aware (A$^2$) optimization framework which jointly optimizes UAV network deployment and task accuracy}. Our key intuition is that a different image compression will result in a different accuracy for the DNN model. Specifically, compressed images have low impact on the network throughput but decrease the DNN accuracy. Conversely, uncompressed images are likely to be classified more correctly but cause higher network overhead. To better highlight this intuition, Figure \ref{fig:qol_examples} shows an example of an original and compressed ($l = 20$) images of \textit{Wildlife} and \textit{Tools} from the ImageNet dataset \cite{deng2009imagenet}. Figure \ref{fig:qol_examples}.B shows how an hatchet at compression level $l =5$ is less recognizable with respect to its original version (Figure \ref{fig:qol_examples}.A), due to the low contrast with the background and the small object dimension. On the other hand, Figure \ref{fig:qol_examples}.D shows how a jaguar at level $l =20$ is easily distinguishable, thanks to its fur pattern and the higher contrast with the background. We consider lossy compression algorithms for images (e.g., JPEG algorithm \cite{bovik2010handbook}), as lossless algorithms do not usually meet the requirements of real-time applications.

\textit{Trading-off network load and latency with task accuracy while aiming at maximizing the target coverage is a daunting challenge.}~\FW considers all these aspects and balances between images compression and network coverage according to current network conditions and application requirements. In Section \ref{sec:qodl} we show that different applications have starkly different compression-accuracy relationships. Specifically, \FW maximizes the number of accomplished tasks at the edge, producing a connected coverage formation for the \Us of the squad and a compression level assignment for the \Us that satisfy the application requirements. The connected coverage formation is designed such that it jointly maximizes the number of covered targets, minimizes network delay due to inefficient routes or channel contention, and minimizes task miss-classification due to high input compression. We show through simulation and \emph{real-world experiments} with a testbed  that \greedyprob attains on the average $38\%$ more accomplished tasks than the state-of-the-art networking-based approach, with a sharp increase ($400\%$ more accomplished tasks) when the number of tasks to offload drastically increases.

\textbf{This paper makes the following novel contributions:}

$\bullet$ We design {\FW} --- a novel application-aware framework that optimizes the number of accomplished tasks at the edge by finding the best network deployment and compression level assignment for the considered application.   
We first design a \textit{Application Aware Task Analyzer} (\texttt{A$^2$-TA}) to learn the requirements of the tasks, and to map the possible data compression of the \Us to the expected task accuracy at the edge. 
Then, we define the \gls{prob} to assign the \Us to the tasks. We prove the NP-hardness of the problem and formulate a polynomial-time \greedyprob algorithm to solve it efficiently; 

$\bullet$ We study the performance of {\FW} with extensive simulations. We analyze six different critical applications for \Us, including Search-and-Rescue, Maritime and Wildlife monitoring. We show how \greedyprob accomplishes around 38$\%$ more tasks with respect to existing network-based approaches, thanks to its application-aware optimization; Whereas, the NP-hard version (\optprob) attains on average 50\% performance increase on restricted problem instances. 

$\bullet$ We implement {\FW} into a testbed and perform \emph{real-field experiments}. We consider four \Us and a Jetson Nano board, mounting a Raspberry PI for computation and communication. We execute an image analysis application in which \Us periodically acquire images from on-board sensors, 
with different delay requirements. We implement four state-of-the-art image classification models (i.e., DenseNet \cite{huang2017densely}, ResNet152, ResNet50 \cite{he2016deep} and MobileNet-V2 \cite{mobilenetv2}), and one object detection model (YoloV4 \cite{Yolov4}) which are executed at the edge server on the Jetson Nano board, and we let the \Us offload tasks through WiFi connection. Experimental results confirm the outstanding performance of \FW. 
%

\section{Related Work}\label{related}
Only very recently has the literature considered task offloading in the context of edge-assisted \gls{dl}-based applications \cite{callegaro2021seremas, chuprov2022degrading}. Chuprov et al. \cite{chuprov2022degrading} show how the performance of the end-line ML systems is affected by the quality of data and network. They consider packet loss and limited bandwidth in a image classification task, and they recommend to stop the system when packet loss reaches 2-5\%. In Section \ref{sec:exp_section} we show that our system enables the classification task even with $15\%$ of packet loss. Chen et al. \cite{chen2020vfc} consider a hierarchical offloading of computation tasks. Conversely from us, they focus on the communication and routing of tasks toward a more computational powerful device, without focusing on the specific task requirements. Yang  et al. \cite{yang2020offloading} propose a hierarchical \gls{dl} task execution framework, in which only a few lower layers of a \gls{cnn} are on the \Us, while the edge server contains the higher layers of the model, which need more resources. However, a single-hop high-performance 4G network is considered, while we focus on the more challenging scenario of multi-hop connectivity toward the edge. Recently, Callegaro \textit{et al.} \cite{callegaro2021seremas} proposed SeReMAS, a framework  where the application-, network- and telemetry-based features are used to select and assign \Us tasks to the most reliable edge servers. However, a single-hop system is considered, and data compression is not explored. 

As one of the output of {\FW} is a connected coverage formation, we mention also some prior art on \Us deployment optimization \cite{bertizzolo2020swarmcontrol, rashid2020socialdrone, bartolini2020multi, chen2020vfc, wang2019dynamic, kimura2020distributed}, which however does not consider task offloading. Natalizio et al.  \cite{natalizio2019take} have considered the problem of minimizing networking resources while maximizing the user experience (i.e., perceived quality) when filming sport events. Moreover,   \cite{tateo2018multiagent, scherer2020multi, bartolini2021connected} optimize network deployment under continuous or periodic connectivity constraints, but they do not consider critical indicators such as task accuracy with delay constraints, which are critical to the \Us mission. Recently, Nguyen proposed a \gls{stba} for target coverage and network connectivity \cite{nguyen2019new}, where Fermat points and the node-weighted Steiner tree algorithm are used to find a tree such that most of the targets are covered, and the \Us are minimized. In Section \ref{sec:perf_section}, we consider variants of \cite{nguyen2019new} as performance benchmarks for \FW.

\section{The {\FW} Framework}\label{sec:framework}
In this section, we give an overview of \FW (Section \ref{sec:taskmas}) and describe the two key components of \FW:  \texttt{A$^2$-TA} (Section \ref{sec:qodl}) and \gls{prob} (Section \ref{sec:prob_stpp}).

\begin{figure}[h]
\centering
\vspace{-0.1cm}
\includegraphics[width=\linewidth]{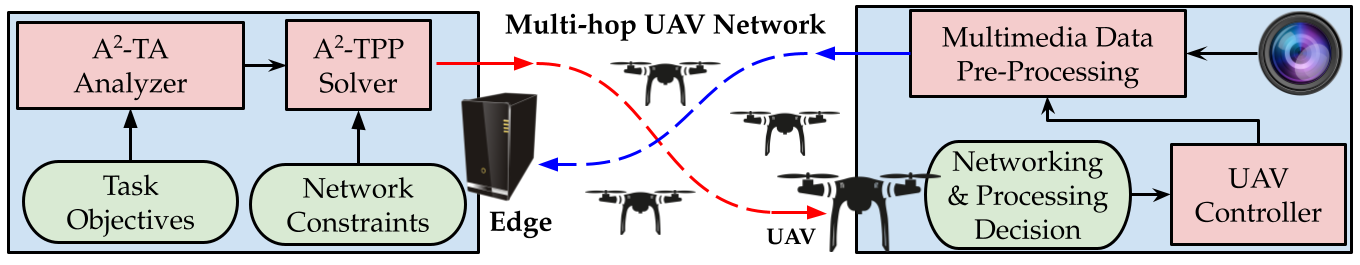}
\caption{High-level overview of {\FW}.
}
\label{fig:framework}
\end{figure}

Figure \ref{fig:framework} shows a high-level overview of \FW. The Application-Aware Task Analyzer (\texttt{A$^2$-TA}) at the  \textit{edge server} learns the relationship between the image compression and the accuracy on a set of classes of interest, and passes its output to the Application-Aware Task Planning Problem (\gls{prob}) solver, which jointly optimizes \Us positions, routing policy, and data compression strategy to maximize the number of correctly executed tasks per unit of time. The optimal network deployment and image compression levels are sent to the \textit{\Us network}, which moves to the targets, monitors them and streams tasks to the edge through the multi-hop connection. \vspace{-0.1cm}


\subsection{System Model and Assumptions}\label{sec:problem}
We assume one or more \Us are deployed over an \gls{aoi}, which contains several \textit{targets}, e.g.,  the location of a vehicle, person, or any entity of interest.  Each \Us is equipped at least with (i) multimedia sensors (e.g., camera and microphone); (ii) a single radio for communication; and (iii) a computational unit. The edge server is equipped with low-latency hardware for \gls{dl} computation. We do not rely on any communication infrastructures (e.g., 5G) and assume edge offloading is realized through multi-hop communication. A \U monitors a target by sampling data through its sensors and generates a \textit{task} to be executed at the edge. A task could be ``car, bicycle, or bus detection on a video camera frame every 10 frames''. The task is then sent to the edge server through a multi-hop connection, where a state-of-the-art \gls{dl} model is run to perform the task. We assume each task has mission-driven constraints in terms of (i) minimum classification accuracy given a specified \gls{dl} model; (ii)  maximum latency, defined as the time between the task generation and its successful execution. Thus, a task is successfully executed if (i) promptly offloaded to the edge;  and (ii) correctly analyzed by the model within a deadline.
\begin{table}[h]
\centering
\footnotesize
\begin{tabular}{|p{0.05\textwidth}|p{0.38\textwidth}|}
\hline
\textbf{Notation} & \textbf{Description} \\ \hline
   $\mathcal{U}$  & A set of available \Us of the fleet\\ \hline
   $\mathcal{T}$  &  A  set of targets to cover \\ \hline
   $\sigma$       &  The edge server      \\ \hline
   $r_{\text{com}}$          &    Drone's communication radius \\ \hline
   $r_{\text{sens}}$         &  Drone's sensing radius \\ \hline
   $l_u$ &    Distance traveled by a drone \\ \hline
   $\delta_u$ &  Drone's overall energy consumption \\ \hline
    $\hat{e}^s_{ij}$ & Amount of data transmitted through the link between \U $i$ and \U $j$\\ \hline 
    $\hat{e}^a_{ij}$ & Expected task accuracy at the edge, for each task \\ \hline
   $\rho_{i,j}$ & Estimated channel data rate \\ \hline
   $\hat{p}_u$ & Position vector assigned by the solver to drone\\ \hline
   $\beta_u$ & Energy spent for each distance unit  traveled at constant speed\\ \hline 
   $\alpha_u$ & Energy spent in a steady position\\ \hline
   $\hat{\omega}^u_{ij}$ & Drone $u$ monitors the target $i$ with a compression $j$ or not\\ \hline
   $\phi_u$ & \U initial energy\\ \hline
   $Q(s, l)$ & A tuple with expected accuracy and data size of the frame of the application scenario $s$, with compression level $l$  \\ \hline 
   $\Psi$ &    The final connected coverage formation returned by the greedy algorithm \\ \hline
   $\tau_\text{best}$ &    Best coverage found during an iteration \\ \hline
   $\tau_\text{par}$ &    Partial coverage to enhance or join with $\Psi$ \\ \hline
\end{tabular}
\caption{Table of Symbols.\vspace{-0.5cm}}
\end{table}

\subsection{Overview of \FW}\label{sec:taskmas}

The ultimate goal of  \FW is to maximize the number of correctly executed tasks. To approach this challenging issue, and conversely from existing work, \textit{\FW takes into account how the task success is affected by the image compression}. To this end, \FW jointly optimizes the deployment of \Us and the task offloading  to maximize the number of executed tasks. Each \U is made up of two key modules. First, the \textit{\U Controller} implements networking and data processing decisions (next position, targets to cover, sampling process, and offloading routes) received from the \gls{prob}. Second, the \textit{Multimedia Data Pre-Processing} module samples data and creates tasks by pre-processing collected multimedia data according to the \gls{prob} solution. 

\subsection{Application-Aware Task Analyzer (\texttt{A$^2$-TA})}\label{sec:qodl}

The \texttt{A$^2$-TA} module determines the relationship between different image compression levels and task accuracy 
so that \Us can reduce the amount of transmitted data and avoid network congestion, while satisfying application requirements. 

\begin{figure}[h]
\centering
\includegraphics[width=0.9\linewidth]{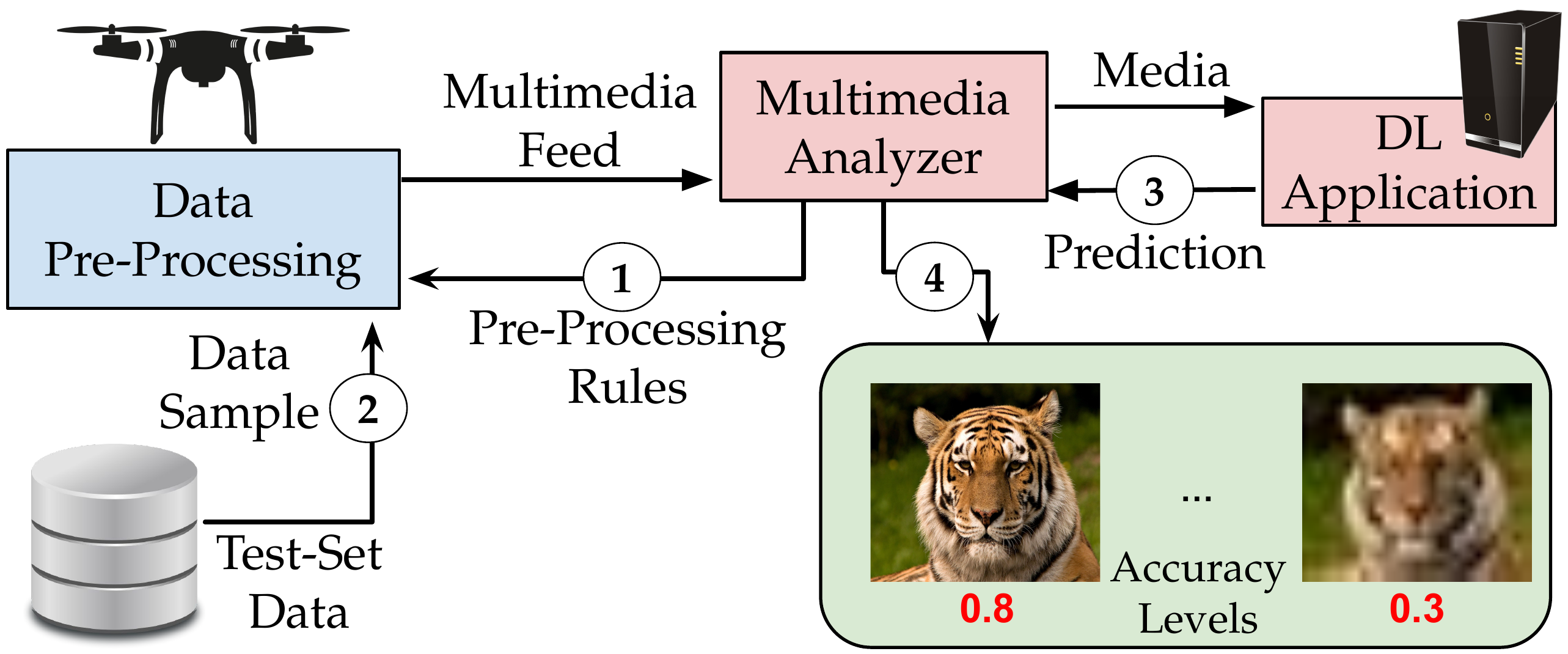}
\caption{Main Operations of the \texttt{A$^2$-TA} module. \vspace{-0.1cm}}
\label{fig:qol_analyzer}
\end{figure}

Figure \ref{fig:qol_analyzer} shows the workflow of the \texttt{A$^2$-TA}, which is executed before the network deployment, by using the datasets of the specific \gls{dl} application. The \texttt{A$^2$-TA} iterates over the sampled data, changing compression according to the \U sensors capabilities (\textbf{step 1} and \textbf{step 2}). Each revised data sample is fed to the \gls{dl} application, which outputs a prediction (\textbf{step 3}). Finally, predictions are compared with the ground truth, to infer and store the model performance according such data compression (\textbf{step 4}).

More formally, the function: $ Q : \mathcal{S} \times \mathcal{L} \rightarrow \mathbb{R}^2$  maps an application scenario in the set $\mathcal{S}$ and a compression level in the discrete set $\mathcal{L} = \{1, .., 100\}$ a tuple in $\mathbb{R}^2$ representing the average \textit{accuracy} and \textit{data size}. The function is determined by iterating over the application scenarios, possible compression levels and relative dataset entries. Each sampled image is compressed according the compression level $l$ by the JPEG compression algorithm \cite{bovik2010handbook}, and it is fed into the \gls{dl} model to determine accuracy and data size.
%

\begin{figure}[h]
\centering
  \begin{minipage}{0.48\linewidth} 
  \begin{overpic}[width=\textwidth]{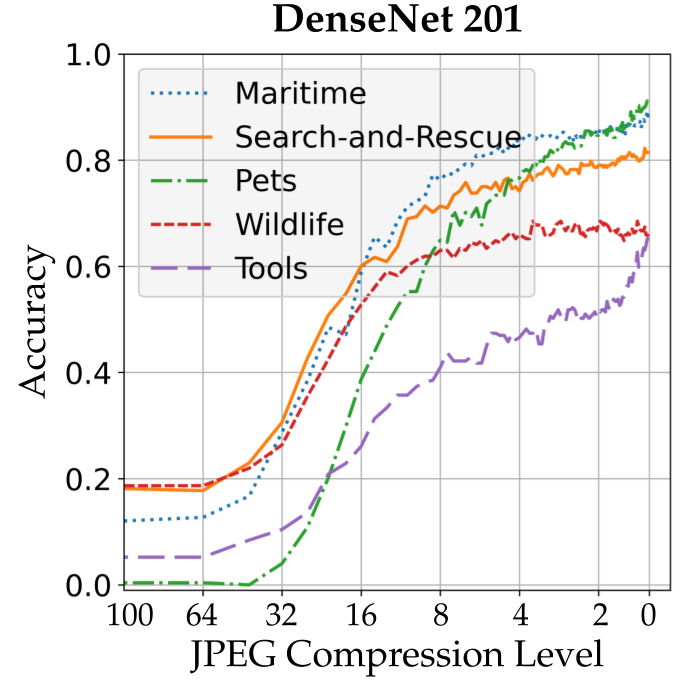} 
  \put (80,20) {{(a)}}
  \end{overpic}
  \end{minipage}  
  \begin{minipage}{0.48\linewidth} 
  \begin{overpic}[width=\textwidth]{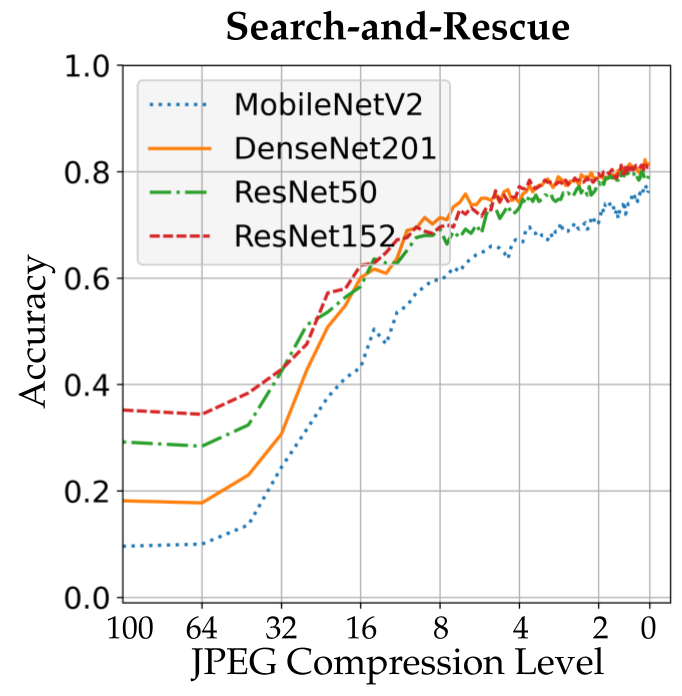} 
  \put (80,20) {{(b)}}
  \end{overpic}
  \end{minipage}  
  %
  \caption{ Accuracy as a function of JPEG compression level.}\label{fig:qol_levels}
\end{figure}

To give an example, we consider 5 scenarios: \textbf{Maritime} (Fireboat, Wreck, Lifeboat, Ocean liner, Speedboat),  \textbf{Search-and-Rescue (SaR)} (Fire truck, Ambulance, Police van, German shepherd, Pickup truck) \textbf{Wildlife} (Kit fox, Polecat, Red wolf, Zebra, Jaguar), \textbf{Tools} (Screwdriver, Power drill, Hatchet, Hammer, Chainsaw),  \textbf{Pets} (Golden retriever, Pomeranian, Guinea pig, Persian cat, Hamster). Figure \ref{fig:qol_levels}.a shows the accuracy for the different scenarios as a function of the compression,  while Figure \ref{fig:qol_levels}.b shows the accuracy for the same scenario when different DL models are used. The figures  highlight the need of the \texttt{A$^2$-TA}. For example, images of \textit{Tools} have low accuracy, constraining the compression at level $l=8$ to achieve at least $40\%$ of accuracy, while \textit{Wildlife} can achieve the same accuracy with higher compression $l=25$.

\subsection{\gls{prob} MILP Formulation}\label{sec:prob_stpp}

%
We  formalize \gls{prob} as a Mixed Integer Linear Programming (MILP). Hereafter we denote the edge server as $\sigma$, the set of targets to monitor as $\mathcal{T}$, the set of available \Us of the fleet as $\mathcal{U}$. The \gls{prob} solver outputs: (i) a connected coverage formation of \Us, 
and (ii) the compression level each drone must adopt to capture images when inspecting a target. 


\begin{definition}
\label{def:DeploymentTree}
A subset of \Us $F \subseteq \mathcal{U}$ is deployed according to a \textbf{connected coverage formation}, when some of the \Us in $F$ are employed to inspect a subset of targets $M \subseteq \mathcal{T}$, while being connected to the base station $\sigma$, either directly or through a \mbox{multi-hop sequence of the other \Us in $F$}.
\end{definition}

\begin{definition}
\label{def:task}
A \textbf{task} for a drone $u \in F$ covering a target $t \in M$, consists in delivering an image captured  from the drone's on-board cameras to the base station $\sigma$. It is said to be an \textbf{accomplished task} if two conditions hold: (1) when received at $\sigma$, the time since the task was created is not superior to a threshold $\Delta$; (2) when the captured image reaches $\sigma$, the DL model outputs a correct prediction for it.
\end{definition}

We define the UAV sensing range and communication range as $r_\text{sens}$ and $r_\text{com}$, respectively. 
We denote with $p$ the position vector of the entities involved, and to $(p_u^x, p_u^y)$ for the $x$ and $y$ coordinates respectively.  In particular $p_u$ is the position of UAV $u, \; \forall u \in \mathcal{U}$ at the beginning of the mission; $p_i$ to the position of the target $i, \; \forall i \in \mathcal{T}$; $p_\sigma$ the position of the edge server. We denote with $\hat{p}_u$ the position vector assigned by the solver to drone $u, \; \forall u \in \mathcal{U}$. We define the distance traveled for each drone as $ l_u = |p_u - \hat{p}_u|, \ \forall u \in \mathcal{U}$.

To estimate energy consumption, we define $\beta_u$ as the energy spent for each distance unit  traveled at constant speed, and $\alpha_u$ as the energy spent in a steady position, for a given unit of time. The values of $\beta_u$ and $\alpha_u$ are estimated through on-field experiments or from technical specifications. The overall energy consumption for UAV $u$ is defined as \mbox{$ \delta_u = l_u \cdot \beta_u + \alpha_u \cdot \Lambda, \ \forall u \in \mathcal{U}$}, where $\Lambda$ is an upper-bound of the time required, once reached the targets, to monitor the targets and complete the mission. We constraint the reachable points according to the \Us initial energy $\phi_u$, as follows:
\begin{equation}
    \delta_u +|p_\sigma - \hat{p}_u| \cdot \beta_u \leq \phi_u, \ \forall u \in \mathcal{U}
\end{equation}
This constraint defines the positions that are reachable as they let the \Us with enough energy to come back to the edge-server for recharging operations. We use the binary variables $\hat{\omega}^u_{ij} \in \{0, 1\}$, which define if the drone $u$ monitors the target $i$ with compression level $j$ or not. A target $i$ is monitored if an only if the \U $u$ is close enough to the target position $p_i$. Formally we want to constraint $\hat{\omega}^u_{i,j} = 1 \iff |\hat{p}_u - p_i| \leq r_\text{sens}$ which becomes $\forall u \in \mathcal{U}, \; \forall i \in \mathcal{T}, \; \forall j \in \mathcal{L}$: : 
\begin{equation}
\begin{split}
r_\text{sens} & \geq |\hat{p}_u - p_i| - M_{\text{cost}} \cdot (1 - \hat{\omega}^u_{i,j}) \\ 
r_\text{sens} & \leq |\hat{p}_u - p_i| + M_{\text{cost}} \cdot \hat{\omega}^u_{i,j} \\ 
\end{split}
\end{equation}
where $M_{\text{cost}}$ is a big constant number. Next, we enforce that a target is covered by at most one UAV and that each UAV can cover only one target:
\begin{eqnarray}
    \sum\limits_{u \in \mathcal{U}} 
    \sum\limits_{j \in \mathcal{L}} 
    \hat{\omega}^u_{ij} \leq 1, \forall i \in \mathcal{T}, \ \ 
    \sum\limits_{i \in \mathcal{T}} 
    \sum\limits_{j \in \mathcal{L}}
    \hat{\omega}^u_{ij} \leq 1, \forall u \in \mathcal{U}
\end{eqnarray}

\noindent We introduce an extended node set $\mathcal{V} = \mathcal{U} \cup \{\sigma\}$ and we consider all the possibles communication paths among nodes, i.e., all the edges $(i, j) \; \forall i, j \in \mathcal{V}$. A binary variable $\hat{\gamma}_{i,j} \in \{0, 1\}$, indicates if the nodes $i$ and $j$ are too distant to communicate. The relation $|\hat{p}_i - \hat{p}_j| \geq r_\text{com}  \Rightarrow \hat{\gamma}_{ij} = 0 $ is enforced as follows: 
\begin{equation}
\begin{split}
|\hat{p}_i - \hat{p}_j| \leq r_\text{com} + M_{cost} \cdot (1 - \hat{\gamma}_{ij}), \ \forall i, j \in \mathcal{V}
\end{split}
\end{equation}


\noindent We define the data frame offloading as a network flow formulation. We introduce a set of variables $\hat{e}^s_{ij}$ defining the amount of data transmitted through the link between \U $i$ and \U $j$. We define $\hat{e}^a_{ij}$ to account for the expected task accuracy at the edge, for each task. We impose that the edge does not generate any outgoing flow, for both data and accuracy flows:
\begin{equation}
    \sum\limits_{j \in \mathcal{V}} \hat{e}^s_{\sigma j} \leq 0, \ \ \  \sum\limits_{j \in \mathcal{V}} \hat{e}^a_{\sigma j} \leq 0 
\end{equation}

\noindent We allow a flow only for between neighboring nodes:
\begin{equation}
    \hat{e}^s_{ij} + \hat{e}^a_{ij} \leq \hat{\gamma}_{ij} \cdot M, \ \forall i, j \in \mathcal{U}
\end{equation}

\noindent The maximum bandwidth allowed between two \Us is constrained to respect the estimated channel data rate $\rho_{i,j}$:
\begin{equation}
    \sum\limits_{j \in \mathcal{V}} \hat{e}^s_{ij} \leq \rho_{i,j},  \ \forall i \in \mathcal{U}
\end{equation}
We specify that a \U can transmit only towards another \U, resulting into a tree rooted at the edge:  
\begin{equation}
    \sum_{j \in \mathcal{V}} \hat{\gamma}_{ij} \leq 1, \ \forall i \in \mathcal{U}
\end{equation}
We impose flow conservation as follows:
\begin{equation}
    \sum\limits_{k \in \mathcal{V}} \hat{e}^s_{uk} - \sum\limits_{k \in \mathcal{V}} \hat{e}^s_{ku} = 
    \sum_{i \in \mathcal{T}} \sum_{j \in \mathcal{L}} b_{i,j} \cdot \hat{\omega}^u_{ij}, \ \forall u \in \mathcal{U}
\end{equation}
which imposes that, for each outgoing edge from $u$, the flow is increased by expected data size of the target covered by the \U $u$ . We also impose that the edge receives all the data produced by the covered targets:
\begin{equation}
    \sum\limits_{k \in \mathcal{V}} \hat{e}^s_{k \sigma} = \sum_{i \in \mathcal{T}} \sum_{j \in \mathcal{L}} b_{i,j} \cdot \hat{\omega}^k_{ij}
\end{equation}

\noindent To conclude, we constraint the accuracy of the targets at the edge-server, as follows:
\begin{equation}
    \sum\limits_{k \in \mathcal{V}} \hat{e}^a_{uk} - \sum\limits_{k \in \mathcal{V}} \hat{e}^a_{ku} = 
    \sum_{t^i_j \in \mathcal{T'}} a_{i,j} \cdot \hat{\omega}^u_{ij}, \ \forall u \in \mathcal{U}
\end{equation}

\noindent \textbf{Objective Function:} maximize covered targets, \gls{dl} tasks accuracy, and energy spent by the \Us: 
\begin{equation}\label{eq:lp_function}
    \max \ \ \alpha \cdot \sum\limits_{j \in \mathcal{V}} \hat{e}^a_{\sigma j}  +
    \beta \cdot \sum\limits_{i \in \mathcal{T}, j \in \mathcal{L}, u \in \mathcal{U}} \hat{\omega}_{ij}^u -
    \eta \cdot \sum\limits_{u\in \mathcal{U}} l_u 
\end{equation}
The term $\alpha$ prioritizes the maximization of the accuracy, while $\beta$ weights the importance of covering the targets and $\eta$ minimized the distance traveled by the \Us. 

\begin{theorem}\label{th:np_hardness}
The \gls{prob} problem is NP-Hard.
\end{theorem}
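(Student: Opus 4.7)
The plan is to reduce a well-known NP-hard geometric problem to a restricted special case of \gls{prob}. Since \gls{prob} simultaneously demands target coverage and a connected tree topology of \Us rooted at the edge server, the natural candidate is the Steiner Tree Problem with Minimum Number of Steiner Points in the Euclidean plane (STP-MSP), known to be NP-hard; this is essentially the combinatorial heart of the \gls{stba} approach discussed in the related work.

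First I would strip away all the nuisance parameters so that only the combinatorial core of \gls{prob} remains. I would consider instances with a single compression level $\ell \in \mathcal{L}$ of unit data size and constant accuracy, bandwidth $\rho_{ij}$ large enough that the flow constraints are trivially feasible, initial energy $\phi_u$ large enough that every candidate position is reachable, and $r_\text{sens}$ chosen so each \U covers at most one target (essentially co-located). Under these restrictions the energy, bandwidth and accuracy constraints are slack, and \gls{prob} collapses to: choose positions $\hat p_u$ for the \Us in $\mathcal{U}$, a coverage assignment $\hat\omega^u_{i\ell}$, and a routing tree $\hat\gamma_{ij}$ so that every covered target has a \U within $r_\text{sens}$ and every active \U is connected to $\sigma$ through multi-hop links of length at most $r_\text{com}$.

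Next I would embed an arbitrary STP-MSP instance, given by a terminal set $P \subset \mathbb{R}^2$ and a distance bound $d$, into this restricted version. Place $\sigma$ at one terminal of $P$ and the remaining terminals as targets in $\mathcal{T}$; set $r_\text{com}=d$ and $r_\text{sens}=0$; supply $|P|+N$ \Us initially at $\sigma$, with $N$ a known polynomial upper bound on the number of Steiner points needed. Pick objective weights with $\beta \gg \eta > 0$ and $\alpha = 0$, so that full coverage of $\mathcal{T}$ dominates and the $\eta\cdot\sum_u l_u$ term acts as a strict tie-breaker that prefers using as few relay \Us as possible, each moving as little as possible. Any optimal solution must then cover all targets (by dominance of $\beta$), hence place $|P|-1$ \Us on the non-$\sigma$ terminals plus a set of relay \Us that together form a tree rooted at $\sigma$ with all edges of length $\le d$; the minimum-distance tie-breaker isolates configurations with the fewest relays, which is precisely an optimal STP-MSP solution. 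Because the construction is polynomial in $|P|$, an exact polynomial-time algorithm for \gls{prob} would yield one for STP-MSP.

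The hard part of this plan will be the tie-breaker argument. One must guarantee that the sum of Euclidean distances travelled by the \Us is genuinely monotone, at the optimum, in the number of Steiner points used, rather than merely in their geometric spread. I expect this will require either a careful relative scaling of $\eta$ against the minimum inter-candidate distance, or a preliminary discretization of candidate relay positions onto a sufficiently fine polynomial-size lattice for which STP-MSP remains NP-hard. A cleaner backup, if calibrating the weights turns out to be delicate, is to reduce to the decision version of \gls{prob} that asks whether a feasible connected coverage formation exists using at most $k$ deployed \Us, which avoids tuning the objective altogether and maps directly onto the threshold version of STP-MSP.
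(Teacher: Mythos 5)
Your reduction source and construction are essentially the paper's: the paper also reduces from the Steiner tree problem with a minimum number of Steiner points and bounded edge length, maps the terminal set $P$ to $\mathcal{T}\cup\{\sigma\}$, sets the communication range equal to the edge-length bound $R$, trivializes the nuisance constraints (one compression level, $|\mathcal{L}|=1$, unlimited batteries), and uses the Steiner-point budget $B$ as the cap on the number of available \Us. The one place your primary plan diverges --- encoding the Steiner-point budget through the objective weights $\beta \gg \eta$ and relying on $\eta\cdot\sum_{u} l_u$ as a tie-breaker --- is exactly where the gap you anticipate is real and, in my view, not repairable by scaling: since all \Us start at $\sigma$, the term $\sum_u l_u$ is the total distance from $\sigma$ to the deployed positions, and a solution using \emph{more} relays clustered near $\sigma$ can have strictly smaller total travel than one using fewer relays placed far from $\sigma$; hence minimizing travel distance is not monotone in the number of Steiner points, and no choice of $\eta$ relative to $\beta$ isolates the minimum-relay configurations. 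Your backup --- posing the decision version that asks whether a connected coverage formation covering all of $\mathcal{T}$ exists using at most $k$ deployed \Us --- is precisely what the paper does (the fleet-size constraint, not the objective, plays the role of $B$), it sidesteps the calibration entirely, and it is the version you should commit to; with that substitution your argument matches the paper's proof.
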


\begin{proof}
We show that \gls{prob} generalizes the \textit{Steiner tree problem with minimum number of Steiner points and bounded edge-length} STPMSPBEL, a known NP-hard problem \cite{lin1999steiner}. Given a set $P$ of $n$ terminal points in a 2-dimensional plane, a positive constant $R$, and a non-negative integer $B$, STPMSPBEL asks whether it exists a tree spanning a set of points $P \subseteq Q$ s.t. each edge has a length less than $R$ and the number of Steiner points (i.e., $Q \setminus P$) is less than or equal to $B$. Any instance of STPMSPBEL can be reduced to an instance of our problem in polynomial time. The set of points $P$ represents our target set $\mathcal{T} \cup \{ \sigma \}$, and $B$ defines the number of available \Us, with communication range equal to $R$. We consider \Us with unlimited batteries  and one compression level (i.e, $|\mathcal{L}| = 1$). This problem instance finds a solution that maximizes the number of connected targets with the edge server, moving the minimum number of \Us. If such a solution exists, and covers all the points in $P$, then it also exists a tree spanning a set of points $P \subseteq Q$, where each edge has length less than $R$ and the number of Steiner points is less then or equal to $B$. The complexity of the above reduction is polynomial, thus we derive that \gls{prob} problem is at least as hard as the STPMSPBEL problem \cite{lin1999steiner}.
\end{proof}


\section{A Polynomial Time Heuristic for \gls{prob}} \label{sec:poly} 

We propose a greedy heuristic to solve \gls{prob} in polynomial time. We first introduce the algorithm, and then prove its polynomial time complexity.

\subsection{Algorithm Overview}

\noindent \textsc{\greedyprob} outputs a \textit{connected coverage formation} --  also referred to as \textit{coverage} for brevity -- for the \Us, and a \textit{compression level assignment} for each covered target. 
Both coverage and compression need to meet the criteria expressed in Equation \ref{eq:lp_function}, that is, optimizing the \textit{number of accomplished tasks}. Our approach is to maximize the number of inspected targets, producing a coverage of minimum congestion, and minimizing task misclassification due to low frame resolution. 
Specifically, a coverage $\tau = (V, E, W)$ is a Triangular Steiner Tree \cite{senel2011relay} in which the set of nodes $V$ represents the positions \Us must reach to cover the target nodes in $M$, while staying connected with the base station $\sigma$ in a multi-hop manner. The set $E$ represents the link between \Us, thus the routes data streams must follow through the network. The function $W$ maps each $e \in E$ to a weight that represents link's bandwidth. In our implementation we estimate this value empirically.
It is assumed that at the base station, communication happens through dedicated transceivers and does not require actual coverage with a drone. Thus, at any time it holds $|V| \leq |\mathcal{U}|+1$. 
\subsection{\greedyprob}

\setlength{\textfloatsep}{4pt}
\begin{algorithm}[t]
{

\footnotesize

\SetAlgoCaptionSeparator{:}
\KwIn{$\mathcal{U}$: set of \Us, $\mathcal{T} $: set of targets} 
\KwOut{$\Psi$ a connected coverage formation}

\vspace{.2cm}

$\widehat{\mathcal{T}}, \Psi, \tau_\text{par}, c_\text{par} \leftarrow \{\sigma\}, \{\sigma\}, \{\sigma\}, 0$ \\
\While{\normalfont $ \mathcal{T} - \widehat{\mathcal{T}} \neq \emptyset$ \textbf{or}  $ |V_{\Psi} \cup V_\text{par}| < |\mathcal{U}| $} { 
$t_\text{best}, \tau_\text{best}, c_\text{best} \leftarrow \emptyset, \emptyset, \infty$ \\
\For{$t \in \mathcal{T} - \widehat{\mathcal{T}}$} {
    
    $\tau_\text{temp} \leftarrow  \text{\textsc{TST}}(\{ t \} \cup \mathcal{C}(\tau_\text{par}), r_\text{com})$ \\
    
    $c_\text{temp} \leftarrow \textsc{Cost}_\alpha(\tau_\text{temp}, \tau_\text{par}, \Psi,  \textsc{Compression}(\tau_\text{temp}))$ \\
    
    \If {\normalfont $c_\text{temp} < c_\text{best}$ \textbf{and} $|V_\text{temp}|-1 \leq |\mathcal{U}| - |V_{\Psi} \cup V_\text{par}|$} {
        $t_\text{best}, \tau_\text{best}, c_\text{best} \leftarrow t, \tau_\text{temp}, c_\text{temp} $ \\
    }
}
\If {\normalfont $t_\text{best} = \emptyset$} { 
    $\Psi \leftarrow \Psi \cup \tau_\text{par}  $ \\
    \textbf{break}
}

$\tau_\text{los} \leftarrow \text{\textsc{TST}}(\{\sigma, t_\text{best}\}, r_\text{com})$ \\

$c_\text{los} \leftarrow$  $\textsc{Cost}_\alpha( \tau_\text{los}, \tau_\text{par}, \Psi, \textsc{Compression}(\tau_\text{los}))$\\

\If {\normalfont $c_\text{los} < c_\text{best} - c_\text{par}$} {
   $ \Psi \leftarrow \Psi \cup \tau_\text{par}$ \\
   $ \tau_\text{par}, c_\text{par} \leftarrow \tau_\text{los}, c_\text{los} $ \\
}
\Else {
    $\tau_\text{par}, c_\text{par} \leftarrow \tau_\text{best}, c_\text{best} $ \\
}
$\widehat{\mathcal{T}} \leftarrow \widehat{\mathcal{T}} \cup \{ t_{\text{best}} \}$ \\
}
$R, \cdot \leftarrow \textsc{Compression}(\Psi)$ \\
\textbf{return} $\Psi$, R
}
\caption{\textsc{Greedy \gls{prob}}}
\label{algo:main}
\end{algorithm}  

Algorithm \ref{algo:main} returns a coverage formation $\Psi$, merging partial coverage formations $\tau_\text{par}$ generated to cover targets using the minimum deployment cost at each iteration. In the initialization phase, we let: $\mathcal{\widehat{T}}$ be the set of covered targets, initially containing only the base station;
$\Psi$ be the coverage archived so far;
$\tau_\text{par}$ be the partial coverage iteratively grown that is added to $\Psi$ when it cannot be further expanded; $c_\text{par}$ be the cost of the partial coverage generated so far (\textbf{line 1}). 
The while loop iterates until either all the targets are covered $ \mathcal{T} - \widehat{\mathcal{T}} \neq \emptyset$ or the number of \Us used does not exceed the fleet size (\textbf{line 2}).
The variables $t_\text{best}, \tau_\text{best}, c_\text{best}$ contain respectively the best target found at each iteration, the coverage including that target and its cost. A for loop over the uncovered targets  $\mathcal{T} - \widehat{\mathcal{T}}$ allows to find the best target to add, building new temporary coverage formations $\tau_\text{temp}$ using the targets already covered by $\tau_\text{par}$ (namely the set $\mathcal{C}(\tau_\text{par})$) and adding to them the candidate target $t$. Then we evaluate the cost of $\tau_\text{temp}$ (\textbf{lines 3-6}). This cost combines the number of drones needed for the coverage, and the loss in accuracy due to the channel contention. We will talk in more detail about how this cost is computed when describing Algorithm \ref{alg:phase2}. Then we check if: (i) $\tau_\text{temp}$ has a lower cost than $\tau_\text{best}$ (ii) and if $\tau_\text{temp}$ can be covered with the remaining \Us.
If both checks go through, then $t$ becomes the best candidate $t_\text{best}$  and the associated candidate coverage $\tau_\text{temp}$ with its cost $c_\text{temp}$ are stored into $\tau_\text{best}$ and $c_\text{best}$ respectively (\textbf{lines 7-8}). In case no target was set as a best candidate, (i.e., $t_\text{best} = \emptyset$) the while loop breaks. This happens only when the second condition at line 7 is not met for any target, that is no coverage formations can stick to the remaining fleet size constraint. 
Then, the cost paid to cover only $t_\text{best}$ that is $c_\text{best} - c_\text{par}$ is compared to the cost $c_\text{los}$ of a new line-of-sight (\textit{los}) branch $\tau_\text{los}$ grown using only $t_\text{best}$ as target. If $\tau_\text{los}$ costs less than the partial grown tree so far $\tau_\text{par}$, then $\tau_\text{par}$ is merged with the final tree $\Psi$. Then $\tau_\text{los}$ becomes the new partial connected coverage to grow. Otherwise growing $\tau_\text{par}$ is still convenient, so $\tau_\text{best}$ becomes the new partial deployment including the new target $t_\text{best}$ and $\tau_\text{los}$ is discarded (\textbf{lines 12-19}). 
When the algorithm terminates (\textbf{line 20}) the final coverage $\Psi$ along with all the compression levels assigned to each target are returned.

\setlength{\textfloatsep}{4pt}
\begin{algorithm}[t]
{
\footnotesize
\KwIn{ a coverage formation $\tau_i$}
\KwOut{$R$ vector with compression levels for all targets in $\tau_i$, $L$ vector with loss in accuracy due to all targets in $\tau_i$}

\vspace{.2cm}

{
sort $t$ by $Q(\mathcal{S}(t),*).b \; \forall t \in \mathcal{C}(\tau_i)$ in ascending order \\

$\widehat{\mathcal{C}}, L, R \leftarrow \mathcal{C}(\tau_i), \langle \rangle, \langle \rangle $ \\

\vspace{.15cm}
\For{\normalfont $ t \in \widehat{\mathcal{C}}$} {
\setstretch{1.2}
    $P \leftarrow \textsc{Shortest-Path} (\tau_i, \sigma, t)$ \\
    $B \leftarrow \textsc{Bottleneck} (\tau_i, P) $ \\
    $R(t) \leftarrow \arg \max_{l \in \mathcal{L}} Q(\mathcal{S}(t), l).b \; \leq  \min \{ B; \; Q(\mathcal{S}(t), *).b \} $ \\
    $L(t) \leftarrow  Q(\mathcal{S}(t), *).a - Q(\mathcal{S}(t), R[t]).a$ \\
    $W_{i}(e) \leftarrow W_{i}(e) - Q(\mathcal{S}(t), R(t)).b \quad \forall e \in P$ \\
    $\widehat{\mathcal{C}} \leftarrow \widehat{\mathcal{C}} - \{t\}$ \\
}
}
\vspace{.15cm}

\textbf{return} $R, L$
}

\label{algo_line:end2}
\caption{ \textsc{Compression}}
\label{alg:phase2}
\end{algorithm}

\subsection{Assignment of Compression Levels}

Algorithm 2 determines the compression levels for each UAV in $F$ inspecting targets in $M$. The rationale is to increase the compression of data flowing from a target, based on the bandwidth assigned to it, leaving more bandwidth to targets having more to send. The algorithm iterates over the targets $t$ covered by the candidate input tree, sorted in ascending order based $Q(\mathcal{S}(t),*).b$, that is the load produced by the target $t$ according to the task analyzer \texttt{A$^2$-TA}, belonging to the application scenario $\mathcal{S}(t)$ and at the minimum compression level (denoted by $*$)  (\textbf{lines 1-3}). 
At each iteration a bottleneck bandwidth $B$ for the target is computed. This quantity is the bottleneck capacity on the path from the source of flow $t$, to the destination $\sigma$. This value is influenced by the number of targets $t$ shares this path with. The bandwidth allocation function can be thought as slight modification of the Depth First Search (\textsc{DFS}) (\textbf{line 5}). 
%
To derive the maximum quantity of load that can be transferred from the target $t$ per unit time, we vary the compression level while remaining subject to the flow constraint (\textbf{line 6}). We store in the vector $R$ the compression level for each target. We store the loss in accuracy for $t$ subject to compression level $R(t)$, comparing the accuracy due to the best quality $Q((\mathcal{S}(t),*).a$ (\textbf{line 7}). The weights of the tree are updated considering the used bandwidth (\textbf{line 8}). Both the compression levels and the loss for each target are returned.

\subsection{Cost of a Coverage}

The cost of a connected coverage formation is parameterized by $\alpha$. This exogenous parameter weights the importance given to the the accuracy of the tasks. Notice that the importance given to task accuracy opposes to the minimization of the number of \Us employed.  Therefore the cost is a linear combination of the average loss in accuracy, and the percentage of used \Us to cover the new target in $\tau_{i}$ which was not present in the previous formation $\tau_{i-1}$, the cost is computed as $\textsc{Cost}_\alpha$:
\begin{equation}
    \alpha \cdot \frac{\sum_{t \in \mathcal{C}(\tau_i)} L(t)}{ |\mathcal{C}(\tau_i)|} + (1-\alpha) \cdot \frac{|V_i - V_{i-1}|-1}{|\mathcal{U}| - |V_\Psi \cup V_{i-1}|} 
\end{equation}

\begin{figure}[b]
\centering
\includegraphics[width=0.8\linewidth]{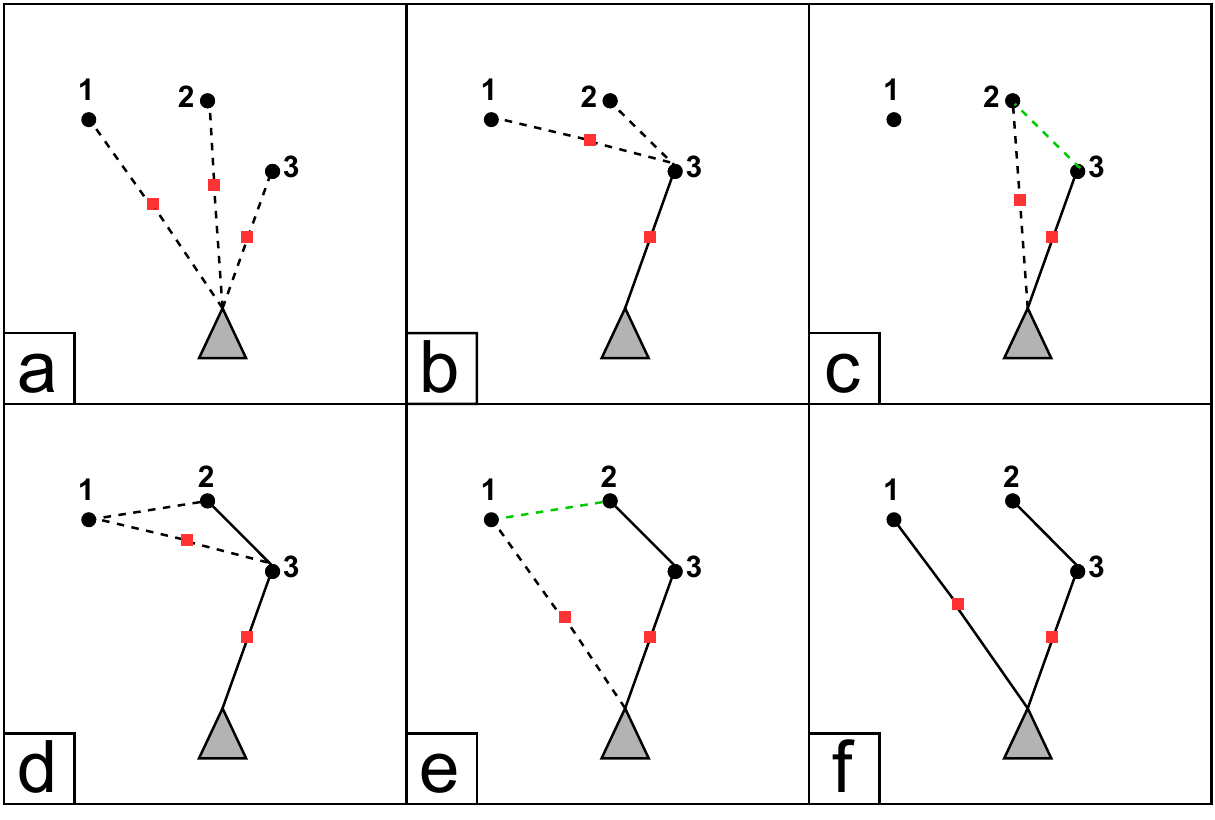}
\caption{\greedyprob algorithm example}
\label{fig:greedy_example}
\end{figure}

\subsection{\textsc{Greedy}-\gls{prob} Example Execution}
Figure \ref{fig:greedy_example} shows an example of execution of \greedyprob.
The gray triangle is the edge server $\sigma$. The black dots and red squares represent the target and relay positions, respectively. 
Figure \ref{fig:greedy_example}-a shows three temporary coverage $\tau_\text{temp}$, each covering a different target. The cost of each of the coverage is compared (algo. 1, \textbf{line 7}). 
Say $\tau_\text{temp}\langle \sigma, t_3\rangle$ is the cheapest coverage among them, that is the tree covering $\sigma$ and $t_3$. At the subsequent iteration shown in Figure \ref{fig:greedy_example}-b, two Triangular Steiner Trees covering $\sigma$, $t_3$ and a new target among the remaining uncovered ones in $\mathcal{T} - \mathcal{\widehat{T}}$ (i.e., $t_2$ and $t_1$) are proposed.
Say the tree $\tau_\text{temp}\langle \sigma, t_3, t_2 \rangle$ is the cheapest coverage among them. In Figure \ref{fig:greedy_example}-c the cost of $\tau_\text{temp}\langle \sigma, t_3, t_2 \rangle$ is compared with a line of sight coverage  $\tau_\text{los}\langle \sigma, t_2 \rangle$. The cheapest coverage among the two becomes the one to grow from the subsequent iterations (algo. 1, \textbf{line 14}).
Say the cheapest coverage among them is $\tau_\text{temp}\langle \sigma, t_3, t_2 \rangle$. In Figure \ref{fig:greedy_example}-d we see two grown versions of the tree covering $t_1$, whereas in Figure \ref{fig:greedy_example}-e we see a line of sight coverage of $t_1$. 
Say that comparing the cost of $\tau_\text{temp}\langle \sigma, t_3, t_2, t_1 \rangle$, and $\tau_\text{los}\langle \sigma,  t_1 \rangle$, the cheapest is the line-of-sight version.
The tree $\tau_\text{los}\langle \sigma,  t_1 \rangle$ becomes the new tree to grow from the subsequent iterations.
$\tau_\text{temp}\langle \sigma, t_3, t_2 \rangle$ is archived in $\Psi$. There are no more targets to cover. $\tau_\text{los}\langle \sigma,  t_1 \rangle$ is \mbox{ archived in $\Psi$ the algorithm stops returning $\Psi$}.

\subsection{Properties of \greedyprob}

\begin{lemma} 
\label{lem:cost}
Computing the compression level assignment has polynomial time complexity of $O(|\mathcal{U}|^2)$.
\end{lemma}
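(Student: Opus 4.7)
The plan is to bound the cost of Algorithm~\ref{alg:phase2} line by line, using the fact that every quantity indexed by the input coverage $\tau_i$ can be expressed in terms of $|\mathcal{U}|$. The key structural observation is that by the \gls{prob} formulation each \U covers at most one target, so $|\mathcal{C}(\tau_i)| \le |\mathcal{U}|$ and the node set $V_i$ of the tree $\tau_i$ satisfies $|V_i| \le |\mathcal{U}|+1$. Consequently, any path or subset of edges of $\tau_i$ has length $O(|\mathcal{U}|)$.

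First I would handle the preprocessing (\textbf{lines 1--2}): sorting the covered targets by the reference load $Q(\mathcal{S}(t),*).b$ costs $O(|\mathcal{U}|\log|\mathcal{U}|)$, and initialising the vectors $L$ and $R$ is $O(|\mathcal{U}|)$. Both are dominated by the main loop.

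Next I would analyze a single iteration of the for loop (\textbf{lines 3--9}). Because $\tau_i$ is a tree, \textsc{Shortest-Path}$(\tau_i,\sigma,t)$ reduces to a single traversal and runs in $O(|V_i|)=O(|\mathcal{U}|)$. The \textsc{Bottleneck} routine scans the returned path once, hence $O(|\mathcal{U}|)$. The $\arg\max$ in \textbf{line 6} ranges over the discrete compression set $\mathcal{L}=\{1,\dots,100\}$, whose size is a constant independent of the input, so this step is $O(1)$; computing $L(t)$ in \textbf{line 7} is $O(1)$ as well. Updating the residual capacities $W_i(e)$ along $P$ in \textbf{line 8} again touches at most $O(|\mathcal{U}|)$ edges. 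Removing $t$ from $\widehat{\mathcal{C}}$ is $O(1)$ with the appropriate set representation. Therefore each iteration is $O(|\mathcal{U}|)$.

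Finally, since the for loop runs exactly $|\mathcal{C}(\tau_i)|\le|\mathcal{U}|$ times, the total cost is $O(|\mathcal{U}|)\cdot O(|\mathcal{U}|) = O(|\mathcal{U}|^2)$, which subsumes the $O(|\mathcal{U}|\log|\mathcal{U}|)$ sort. This yields the claimed bound. The only mildly delicate point---which I would state explicitly rather than gloss over---is that $|\mathcal{L}|$ must be treated as an absolute constant (it is $100$ throughout the paper); otherwise the bound would read $O(|\mathcal{U}|^2 + |\mathcal{U}|\cdot|\mathcal{L}|)$. Under the paper's convention, this extra term collapses and the lemma follows.
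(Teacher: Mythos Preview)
Your proof is correct and follows essentially the same line-by-line decomposition as the paper's own sketch: bound the sort, bound each iteration of the loop, multiply by the number of covered targets, and absorb everything into $|\mathcal{U}|$ via $|\mathcal{C}(\tau_i)|,|V_i|=O(|\mathcal{U}|)$. The only minor difference is that the paper estimates the per-iteration path operations at $O(\log|V_i|)$ (implicitly treating the tree depth as logarithmic) whereas you use the more conservative and generally valid $O(|\mathcal{U}|)$ bound; both routes land on the stated $O(|\mathcal{U}|^2)$.
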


\begin{proof}[Proof sketch]
To measure the cost of a coverage tree $\tau_i$, the set of targets in the tree $\mathcal{C}(\tau_i)$ is sorted by their expected transmission load in ascending order. Sorting requires $O(|\mathcal{T}| \log |\mathcal{T}|)$ time complexity. 
The for loop iterates over the targets in $\tau_i$ first computing the bottleneck bandwidth for $t$, having approximately the cost of a Depth First Search and Shortest Path, that is $O(\log |V_i|)$ for the tree. Iterating over the compression levels to find the highest resolution to fit the bandwidth has constant complexity $|\mathcal{L}|$ i.e., the cardinality of the discrete set of possible compression levels. Iterating over the edges of the path $P$ to update the residual bandwidth has cost $O(\log |V_i|)$.
Other assignments have evident constant complexity.
By noticing that $|V_i| = O(|\mathcal{U}|)$ the overall time complexity of computing the cost of a coverage tree is $O(|\mathcal{T}| \log |\mathcal{U}|)$. The complexity further simplifies by considering $|\mathcal{T}| = O(|\mathcal{U}|)$, thus resulting in $O(|\mathcal{U}|^2)$. 
\end{proof}

\begin{theorem}[Time Complexity of \greedyprob] 
\greedyprob with input $\mathcal{T}$ targets sets has polynomial time complexity of $O(|\mathcal{U}|^6)$.
\end{theorem}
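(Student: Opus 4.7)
The plan is to bound the running time of \greedyprob by walking through its loop structure and multiplying per-iteration costs. The outer while loop (line 2) terminates either when every target has been handled or when the fleet has been exhausted; since each non-breaking iteration assigns one additional target to $\widehat{\mathcal{T}}$, the loop runs at most $|\mathcal{T}|+1$ times. Assuming $|\mathcal{T}| = O(|\mathcal{U}|)$ (the same assumption used in Lemma \ref{lem:cost}), this contributes a factor $O(|\mathcal{U}|)$.

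Within each outer iteration, the for loop on lines 4--9 ranges over the currently uncovered targets and thus performs $O(|\mathcal{U}|)$ iterations. Every inner iteration consists of a constant number of constant-time bookkeeping assignments plus two nontrivial subroutines: a \textsc{TST} construction on at most $|\mathcal{U}|+1$ points (line 5), and a \textsc{Compression} call inside \textsc{Cost}$_\alpha$ (line 6). By Lemma \ref{lem:cost}, each \textsc{Compression} invocation runs in $O(|\mathcal{U}|^2)$, and evaluating the weighted sum that defines \textsc{Cost}$_\alpha$ adds only $O(|\mathcal{U}|)$ on top. The residual work after the for loop (lines 12--19) consists of one more \textsc{TST}/\textsc{Compression} pair plus a scalar comparison, so it is asymptotically absorbed into the cost of a single outer iteration; the final \textsc{Compression} call at line 20 adds a negligible $O(|\mathcal{U}|^2)$.

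Putting the three factors together yields an overall complexity of $O(|\mathcal{U}|^2) \cdot (\textsc{TST} + O(|\mathcal{U}|^2))$. The main obstacle is pinning down a sufficiently tight bound on \textsc{TST}: I would appeal to the construction of Senel et al.~\cite{senel2011relay}, whose dominant step on $k = O(|\mathcal{U}|)$ terminals iterates over triples of terminals to identify candidate Fermat relay points and incrementally rebuilds the auxiliary spanning structure, giving a worst-case cost of $O(|\mathcal{U}|^4)$ per call. With this bound \textsc{TST} dominates \textsc{Compression}, and the product $O(|\mathcal{U}|) \cdot O(|\mathcal{U}|) \cdot O(|\mathcal{U}|^4) = O(|\mathcal{U}|^6)$ yields the claim. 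If the TST routine used in practice admits a sharper bound, the same argument immediately tightens the overall exponent, so the analysis is robust to the specific Steiner-tree construction plugged in.
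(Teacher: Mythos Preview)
Your proposal is correct and follows essentially the same argument as the paper's proof: bound the while loop by $|\mathcal{T}|$, the inner for loop by $|\mathcal{T}|$, invoke the $O(|\mathcal{T}|^4)$ bound for the Triangular Steiner Tree from \cite{senel2011relay} and the $O(|\mathcal{U}|^2)$ bound from Lemma~\ref{lem:cost} for \textsc{Compression}, then collapse the product to $O(|\mathcal{U}|^6)$ under the assumption $|\mathcal{T}| = O(|\mathcal{U}|)$. Your treatment of the post-loop work (lines 12--20) is slightly more explicit than the paper's, which simply declares the line-of-sight \textsc{TST} constant, but asymptotically the accounting coincides.
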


\begin{proof}[Proof sketch]
The while loop  is executed, in the worst case, until all the targets in $\mathcal{T}$ are included in the final solution $\Psi$. 
Within the while loop, a for loop iterates over the set of uncovered targets. For each of them a Triangular Steiner Tree $t_\text{temp}$ is computed, and the time complexity is bounded by $O(|\mathcal{T}|^4)$ \cite{senel2011relay}. The cost of each tree is computed with complexity  $O(|\mathcal{U}|^2)$ as shown in Lemma \ref{lem:cost}.
Once the best candidate target to cover has been chosen, the Triangular Steiner Tree $t_\text{los}$ of the shortest path path towards the target, and the relative cost $c_\text{los}$ are computed. The time complexity to find a stripe can be considered constant in time. The overall time complexity of \greedyprob is thus given by: $O(|\mathcal{T}|(|\mathcal{T}|(|\mathcal{T}|^4 + |\mathcal{U}|^2) + |\mathcal{U}|^2)) = O(|\mathcal{U}|^6)$.

\end{proof}

\section{Performance Evaluation}\label{sec:perf_section}

We extensively evaluate {\FW} through simulation (Sect.  \ref{sec:simulations}) as well as real-world experiments (Sect. \ref{sec:exp_section}).

\subsection{Evaluation Setup}\label{sec:setup}

\textbf{Application.}~We consider a monitoring application where \Us need to perform image classification or object detection tasks on \textit{target} locations by sampling images at given frame rate (e.g., 24 \gls{fps}). We adopt (i) \textit{ResNet-50}, a \gls{cnn} with 50 layers \cite{he2016deep}; (ii) \textit{ResNet-152}, an extended version with 152 layers \cite{he2016deep}; (iii) \textit{DenseNet} \cite{huang2017densely}, which consists of a Dense Convolutional Network (i.e., each layer is connected to all the other layers in a feed-forward fashion); (iv) \textit{MobileNet-V2} \cite{mobilenetv2}, a new neural architecture for mobile devices; (v) YoloV4, the state-of-the-art model for object detection. All the models were trained on the ImageNet database \cite{deng2009imagenet}.

\textbf{Scenarios}. 
To emulate common scenarios for UAVs, we use the five scenarios described in Section \ref{sec:qodl}, i.e., \textit{
Maritime, Search-and-Rescue, Wildlife, Tools, Pets}. We also design an \textit{Urban} reconnaissance scenario including various objects, such as \textit{wreck, fireboat, ambulance, police van, revolver, crate, packet, backpack, mountain bike, motor scooter}. To ensure repeatability of our experiments, we let the UAVs sample images from a labeled subset of ImageNet. Where not otherwise stated, each target location generates 500 tasks (images) uniformly sampled among these classes. 

\textbf{Metrics}. 
We measure the \textit{Percentage of Accomplished Tasks}, defined as the ratio between the number of successfully completed tasks (according to Definition \ref{def:task}) and the number of the generated tasks. The accomplishment of a task is influenced by its  deadline $\Delta$. In order to study the performance of \gls{prob} at varying application scenarios, we let $\Delta$ vary: low values represent delay critical applications (e.g., intrusion detection), whereas high values, delay tolerant ones (e.g., agriculture).
%
%
We also measure \textit{Computational Time}, that is the time required by the algorithms to output a connected coverage formation and compression levels for the targets.

\textbf{Comparison.}~We evaluate \FW through real-field experiments and simulation, considering both the optimal solution \optprob and the greedy algorithm \greedyprob, against \gls{stba} \cite{nguyen2019new}. \gls{stba} is a state-of-the-art networking-based approach that is the closest to our work. \gls{stba} covers a set of targets while providing network connectivity to the edge server. To find a connected tree, \gls{stba} uses a node-weighted Steiner tree algorithm, which computes a set of Fermat points to place relays, and then computes a tree among the targets and the edge-server, minimizing the needed \Us. To allow for a fair comparison, we enhance \gls{stba} with data compression in three variants: 1) H-\gls{stba}, which does not compress data, but uses the \textbf{H}ighest available quality for collected data ($l=1$); 2) M-\gls{stba} which uses the \textbf{M}edium compression ($l = 50$); and 3) L-\gls{stba} which uses an extreme compression ($l=100)$ resulting in the \textbf{L}owest data quality.

\subsection{Simulation Results}\label{sec:simulations}

We used the NS-3 network simulator \cite{nsnam}, setting most of parameters in line with the devices used in our real-field experiments (e.g., WiFi interface 802.11n at 2.4 GHz), and testbed measured values (UAVs transmission range is $16m$, sensing radius $1m$, and maximum speed $5m/s$)\footnote{The code is available at https://github.com/flaat/AA-UAV}. The simulated area is a square of $500 \times 500 m$, with an edge-server positioned in the center of the bottom border. The number of targets varies from 4 to 50, and the number of UAVs from 4 to 50.  

\begin{figure}[h] 
  \begin{minipage}{0.48\linewidth}
  \includegraphics[width=1\textwidth]{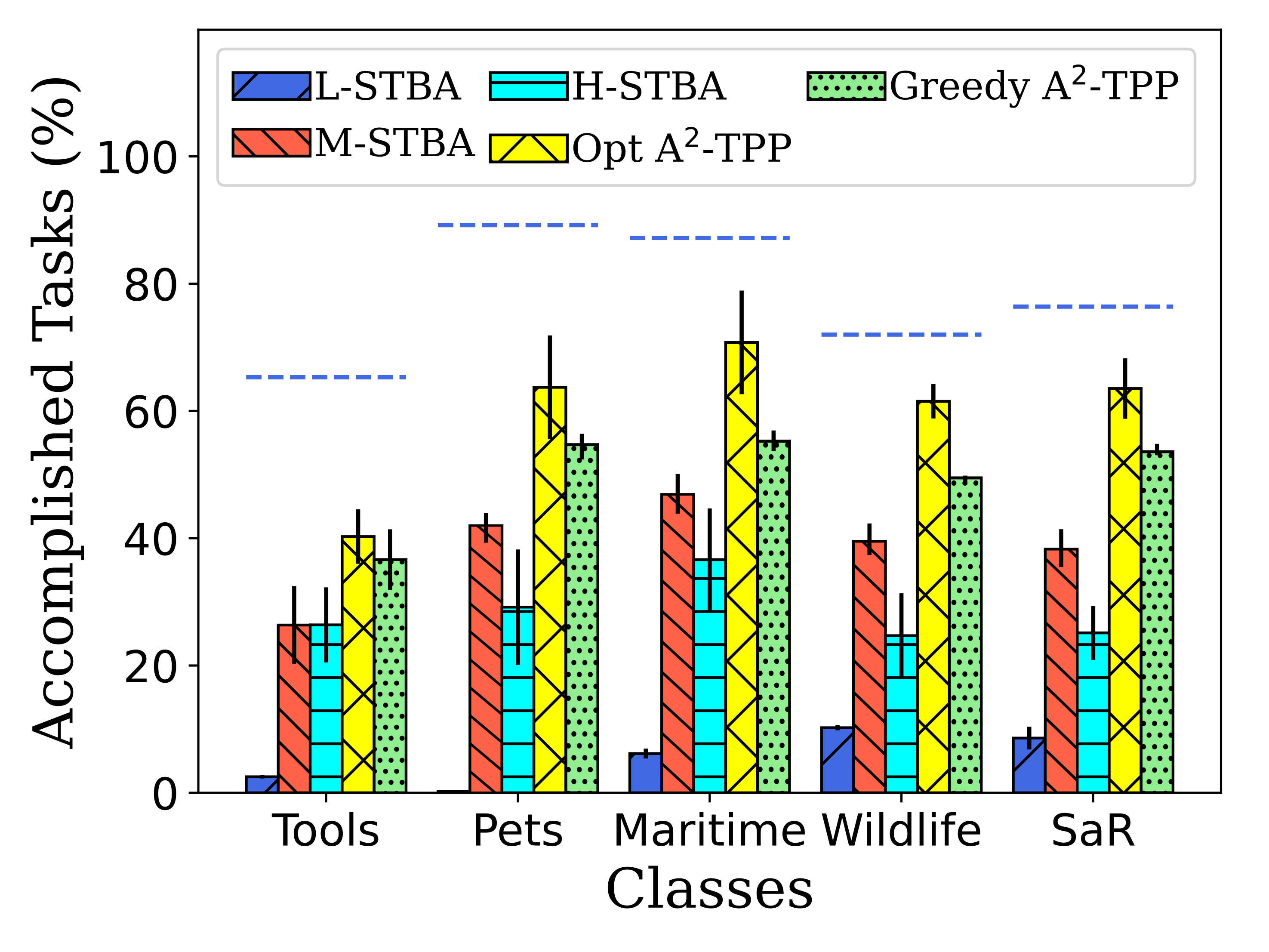}
  \caption{{\small Accomplished tasks  (\%), $\Delta=0.1$sec}}
    \label{fig:multi_dense_all}
  \end{minipage}  
  \hfill
    \begin{minipage}{0.48\linewidth}
  \includegraphics[width=1\textwidth]{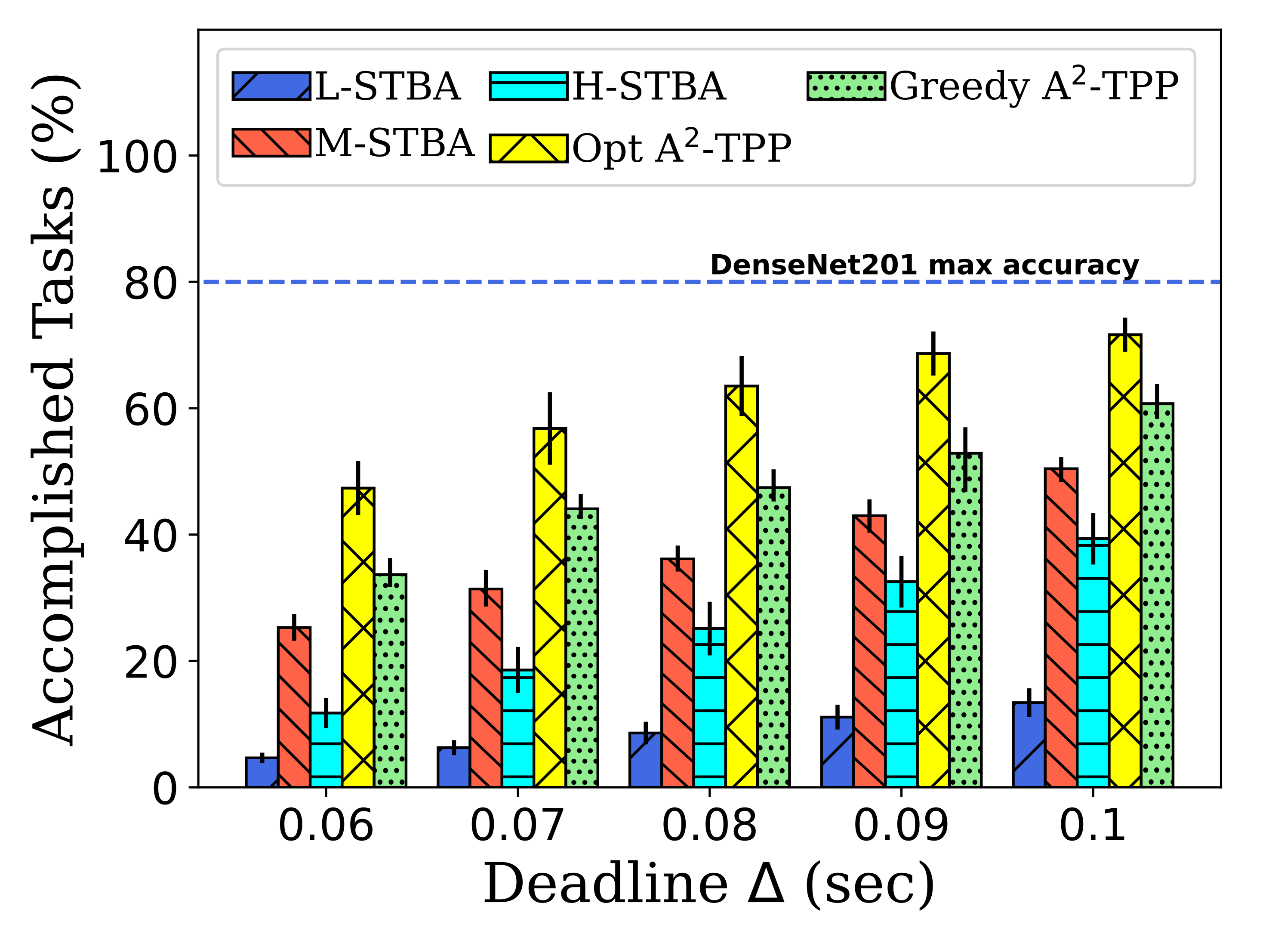}
  \caption{{\small Accomplished tasks (\%) at increasing of $\Delta$}}
\label{fig:multi_dense_sar}
  \end{minipage}
\hfill
  \begin{minipage}{0.48\linewidth}
  \includegraphics[width=1\textwidth]{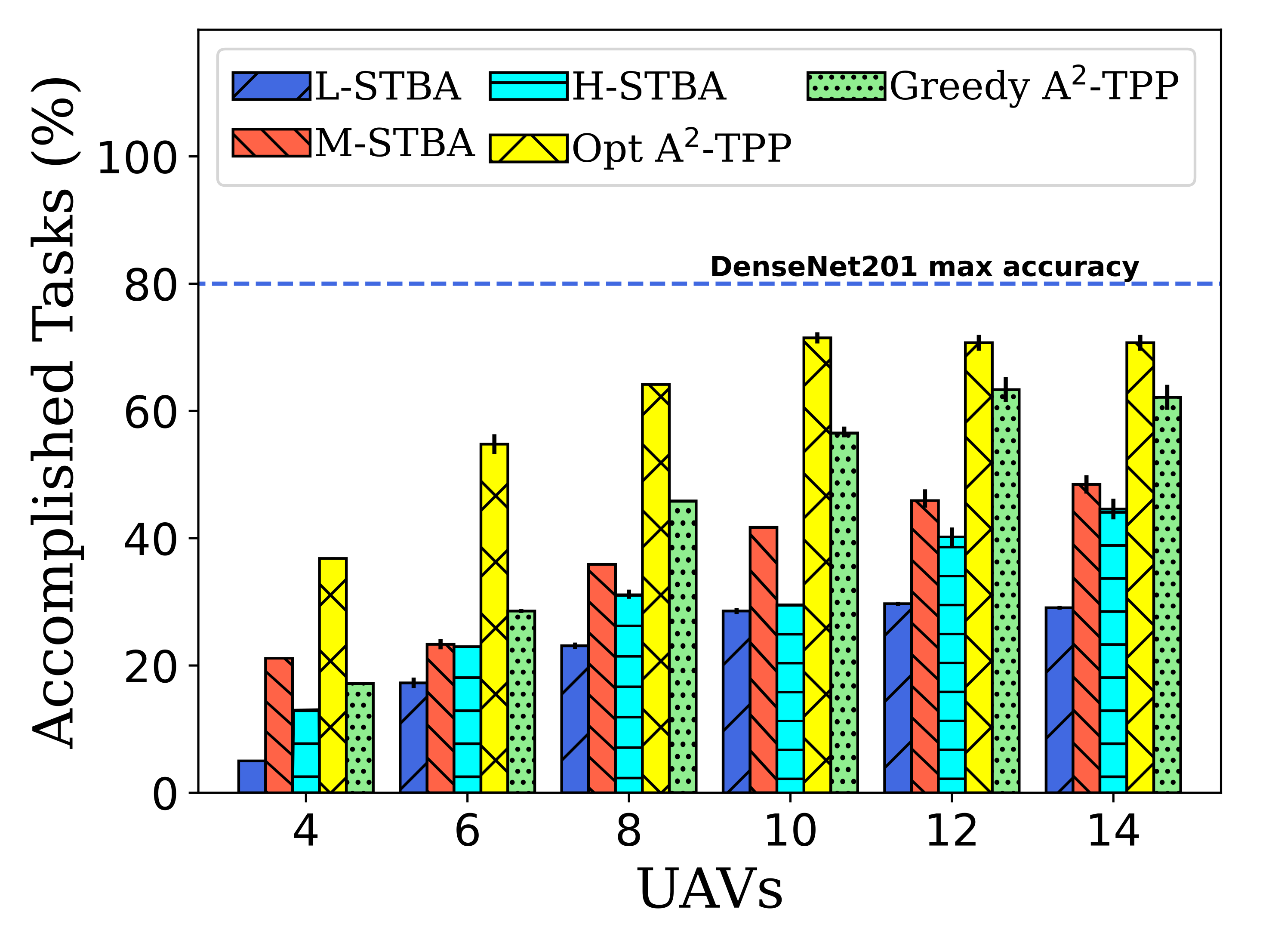}
  \caption{{\small Accomplished tasks (\%) with 6 Targets, $\Delta=0.1$sec}}
    \label{fig:nvar_drones_6_targets}
  \end{minipage}  
      \hfill
    \begin{minipage}{0.48\linewidth}
  \includegraphics[width=1\textwidth]{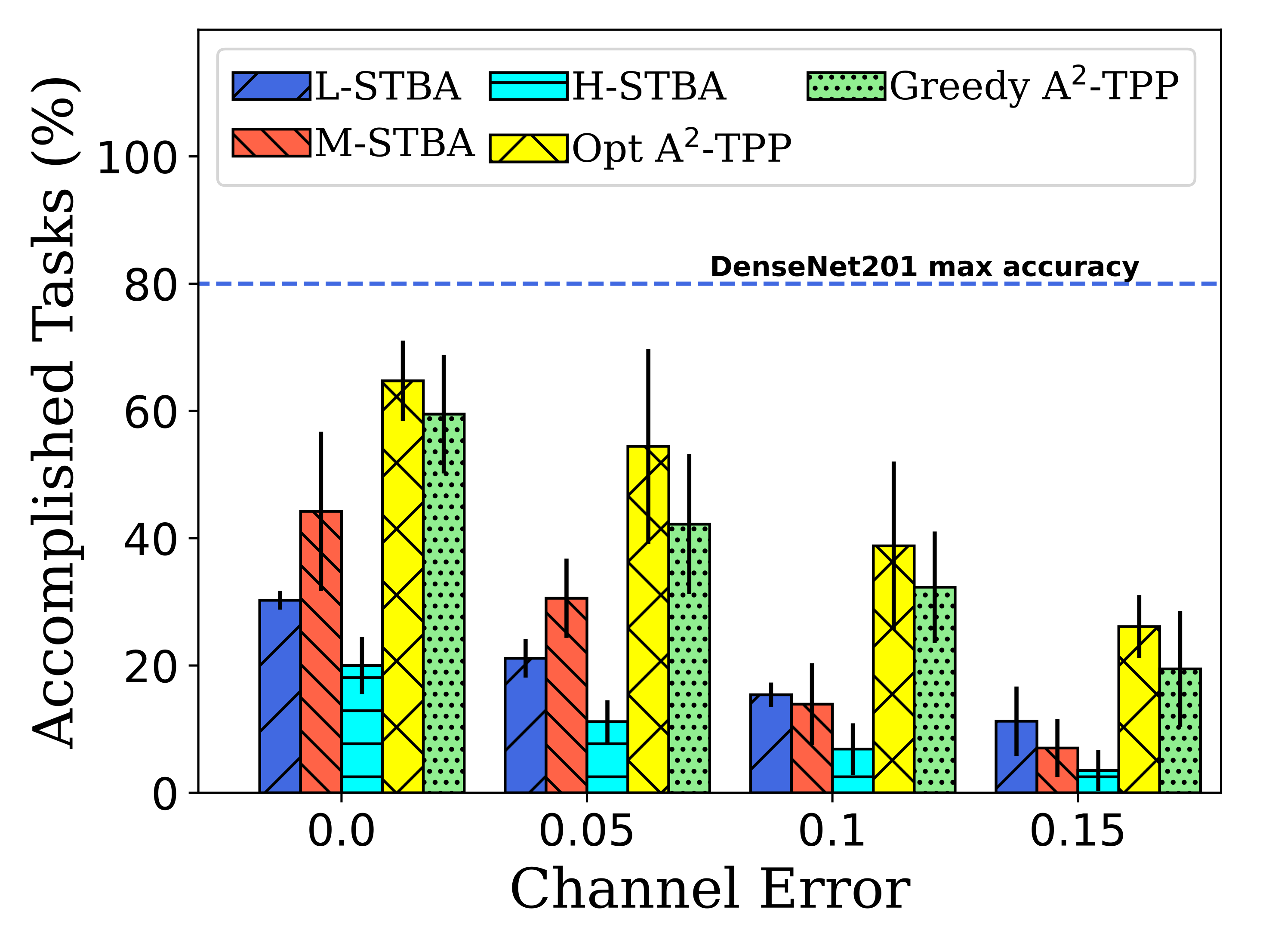}
  \caption{{\small Accomplished tasks (\%), 4 targets, $\Delta=0.1$sec}}
\label{fig:channel_error}
  \end{minipage}
 \hfill
    \begin{minipage}{0.48\linewidth}
  \includegraphics[width=1\textwidth]{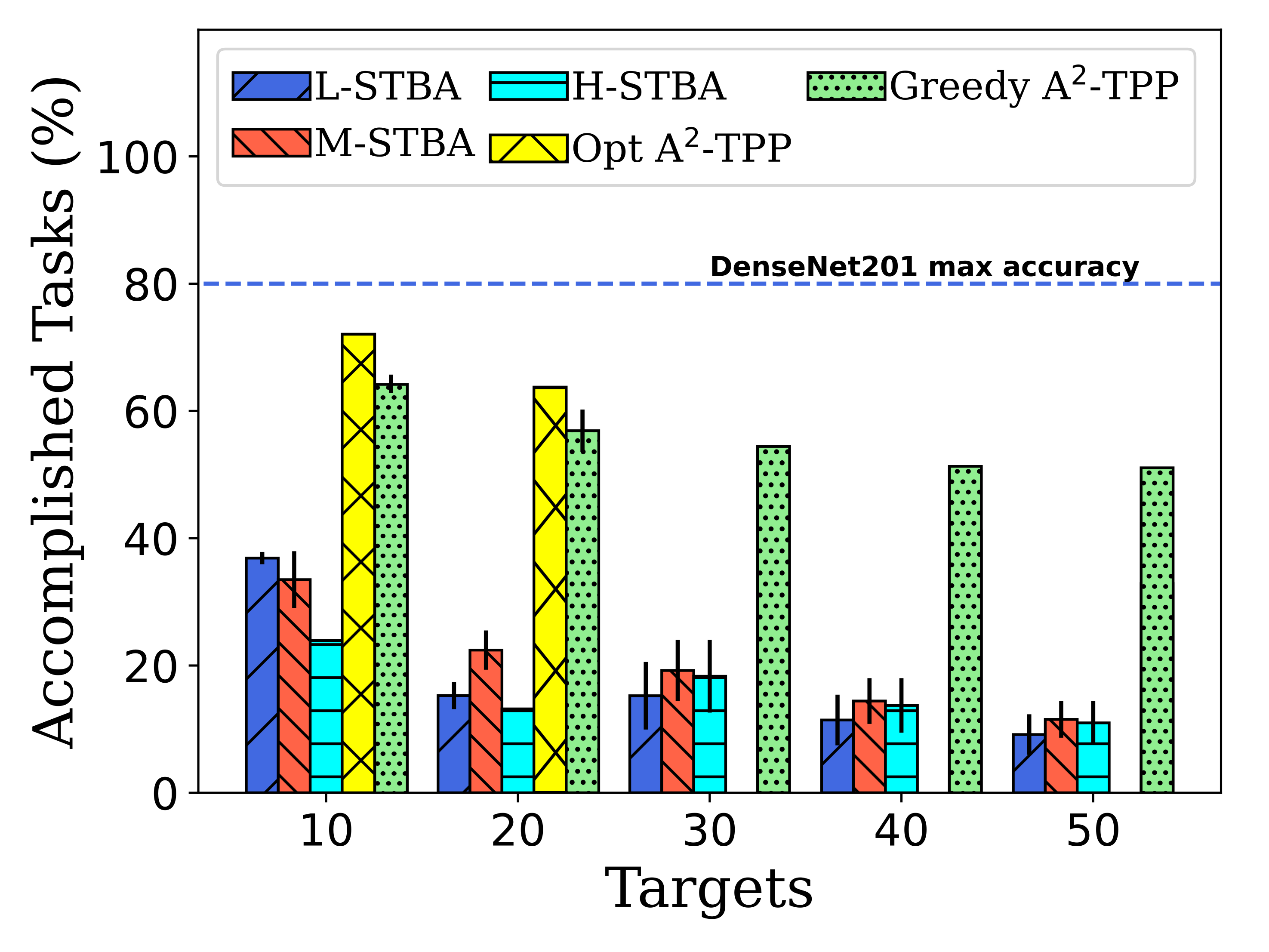}
  \caption{{\small Accomplished tasks (\%), increasing targets}}
\label{fig:nvar_targets}
  \end{minipage}
    \hfill
    \begin{minipage}{0.48\linewidth}
  \includegraphics[width=1\textwidth]{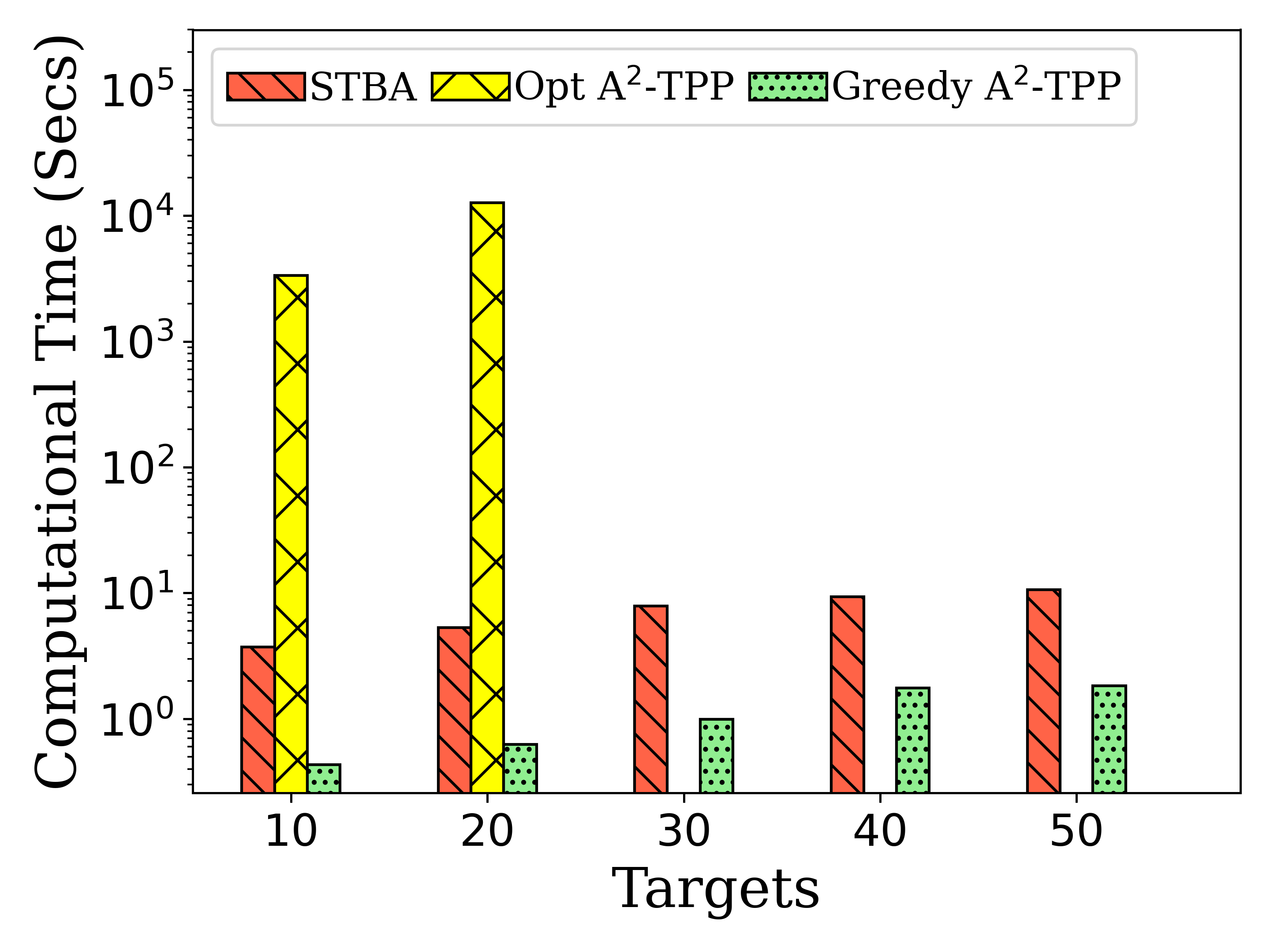}
  \caption{{\small Computational time (sec)}}
\label{fig:times}
  \end{minipage}
\end{figure}  

\vspace{0.1em}
\subsubsection{Multiple Scenarios}
Figures \ref{fig:multi_dense_all} illustrates the efficacy of our solution for different scenarios, reporting also the theoretical upper bound (blue dotted line). 
In the most challenging scenario, i.e., \textit{Tools}, with a strict task deadline ($\Delta=0.1$sec) and DenseNet-201 \gls{dl} model, \textbf{\optprob completes $41\%$ of tasks, with an improvement of $52\%$ respect the best \gls{stba} variant}, i.e., H-\gls{stba}, which completes less than $27\%$ of tasks. The theoretical upper-bound for DenseNet-201 in the same scenario is $60\%$, meaning that under ideal network conditions of zero latency and no compression, the \gls{dl} model would correctly classify only $60\%$ of the tasks (Figure \ref{fig:qol_levels}.a show the complexity of predicting tools images, even for un-compressed images). 
In the case of \textit{Pets and Maritime}, 
\optprob reaches the highest percentage of accomplished tasks  --- $65\%$ and $70\%$ respectively --- where the upper-bounds are $90\%$ and $88\%$. \textbf{The improvement with respect to the best STBA variant, i.e., M-\gls{stba}, is $55\%$ and $50\%$}. \textit{Pets} require a compression level lower than $l=50$ (see Figure \ref{fig:qol_levels}.a) to achieve satisfactory performance, forcing both \optprob and \greedyprob to select a medium compression level, more similarly to M-\gls{stba}. 
\\
In \textit{Wildlife} and \textit{Search-and-Rescue} (SaR), the gap between both the \gls{prob} versions and STBA variants increases significantly. \optprob and \greedyprob complete respectively $60\%$, $63\%$ and $49\%$ and $54\%$ of tasks,  against $39\%$ and $37\%$ of M-\gls{stba}. The motivation behind this sharp improvement is the use of the \texttt{A$^2$-TA}, which understands that even high compressed images can achieve satisfactory performance. Therefore, both our solutions can achieve high accuracy with low network usage, executing the tasks within their deadline $\Delta=0.1$ seconds. \greedyprob completes $20\%$ and $31\%$ more tasks than M-\gls{stba}. 

\vspace{0.1em}
\subsubsection{Urban Scenario}

Figure \ref{fig:multi_dense_sar} shows the performance in the \textit{Urban} scenario as a function of task deadline $\Delta \in \{ 0.06, 0.07, 0.08, 0.09, 0.1 \}$, when DenseNet-201 is employed. 
%
\optprob accomplishes tasks up to $72\%$ in the case of  $\Delta=0.1$sec, while the best variant M-\gls{stba} achieves only $48\%$ of tasks at the same $\Delta$. \textbf{\optprob accomplishes $58\%$ more tasks than M-\gls{stba} with the tightest deadline}, as it adapts the compression of images to meet the latency constraint. The plot also confirms the performance of \optprob that outperforms the network-based approaches (i.e., M-\gls{stba}) up to $45-50\%$. \textbf{\greedyprob follows the \optprob trend always remaining widely above the performance of \gls{stba} solutions.} Figure \ref{fig:nvar_drones_6_targets} shows the percentage of accomplished tasks as function of the number of UAVs, with 6 targets randomly distributed in the area. We employ DenseNet-201, which achieves a maximum accuracy of $80\%$, and set $\Delta=0.1$sec. Both \optprob and \greedyprob outperform the STBA variants, as they cover all the targets with only 8 UAVs. Conversely, the \gls{stba} variants require at least 10 UAVs to cover all the targets, and achieve lower performance. \optprob covers $15-20\%$ more targets than \gls{stba} algorithms, in all the scenarios, completing $69\%$ of tasks (using 10 UAVs), while the best variant M-\gls{stba} accomplishes only $42\%$ of tasks with the same number of UAVs. We can notice how \greedyprob performs better as quickly as number of UAVs grow reaching similar performance of \optprob.


\vspace{0.1em}
\subsubsection{Robustness to Channel Errors}

In Figure \ref{fig:channel_error} we plot the percentage of accomplished tasks by varying the probability of channel error $\psi \in \{0, 0.05, 0.1, 0.15\}$, in a setting with 20 UAVs and 4 targets. Both \optprob and \greedyprob are the most robust algorithms, increasing their improvement with respect to \gls{stba} variants. \textbf{\optprob completes  up to 170\% more tasks than the other approaches}.
On the other hand, M-\gls{stba} and H-\gls{stba} experience severe delays and drastic performance reduction due to frequent TCP re-transmissions, which introduces additional data in the network, further overloading communication links.

\vspace{0.1em}
\subsubsection{Scalability} Figure \ref{fig:nvar_targets} investigates the percentage of accomplished tasks in a scenario with 50 UAVs, varying the number of targets from 10 to 50. We do not include the \optprob when the targets are more than 20, due to prohibitive computational time. This result underlies the huge benefit introduced by the polynomial time solution \greedyprob, which scales gracefully when the problem instance grows in complexity. The figure shows that \greedyprob has near optimal performance with 10 targets, accomplishing $63\%$ of the tasks, while L-\gls{stba} accomplishes only $38\%$ of them. 
All the algorithms have a slightly decreasing trend as the number of targets increases, as the UAVs have to offload more tasks with possible network congestion and missed deadlines. The \gls{stba} variants quickly drop their performance due to congestion and long delays, while \greedyprob is able to keep satisfactory performance around $50\%$, trading off compression and accuracy to cover all targets and offload their data. With 50 targets \greedyprob accomplished 5 times the tasks of the best \gls{stba} variant.
Finally, in Figure \ref{fig:times} we investigate the computational time. We restrict the time to a maximum of 5 hours (18000 seconds), and we consider no solutions after that time. We consider a general \gls{stba} instance without compression levels, as they do not affect the execution time. While \optprob has very huge computational times even with 10 targets, \textbf{\greedyprob is 15x faster than the \gls{stba} solutions}. 

\subsection{Experimental Testbed Results}\label{sec:exp_section}

We now evaluate the performance through our experimental testbed. \textbf{The testbed is composed of 4 UAVs and an edge server with dedicated GPU}. We emulate missions with up to 4 targets. We run 10 experiments for each scenario and we average the results. Each UAV includes a DJI Mavic Air 2 drone, mounting a Raspberry PI 4 model B and a power bank, as shown in Figure \ref{fig:testbed}. The on-board Raspberry PI, powered by the power bank is used to generate and pre-process tasks, and to offload them to the edge according to the optimization plan. For repeatability and to emulate different scenarios, we sample images from the ImageNet dataset \cite{russakovsky2015imagenet}.

\begin{figure}[h]
    \centering
    \begin{minipage}{0.45\linewidth} 
    \resizebox{\hsize}{!}{
    \begin{tabular}{ll}
        \hline
Field                            & *****                  \\ \hline
        \textbf{Time}              & 9:00-18:30 a.m.        \\
        \textbf{Temperature}       & +4 - 15°C              \\
        \textbf{Wind Speed}              & 0.0 to 4.3 m/s         \\
        \textbf{Field Size}              & 65x35(meters)          \\
        \textbf{Nr. Of UAVs}        & 4                      \\
        \textbf{Nr. Of Targets}       & Variable (from 1 to 4) \\
        \textbf{Humidity}                & 60\% - 77\% \\
        \textbf{AMSL} & 2 meters               \\ \hline
    \end{tabular}}
    \captionof{table}{\small Experimental Setting}\label{tab:exp_cond}
    \end{minipage}
    \hfill
\begin{minipage}{0.48\linewidth} 
\includegraphics[width=\linewidth]{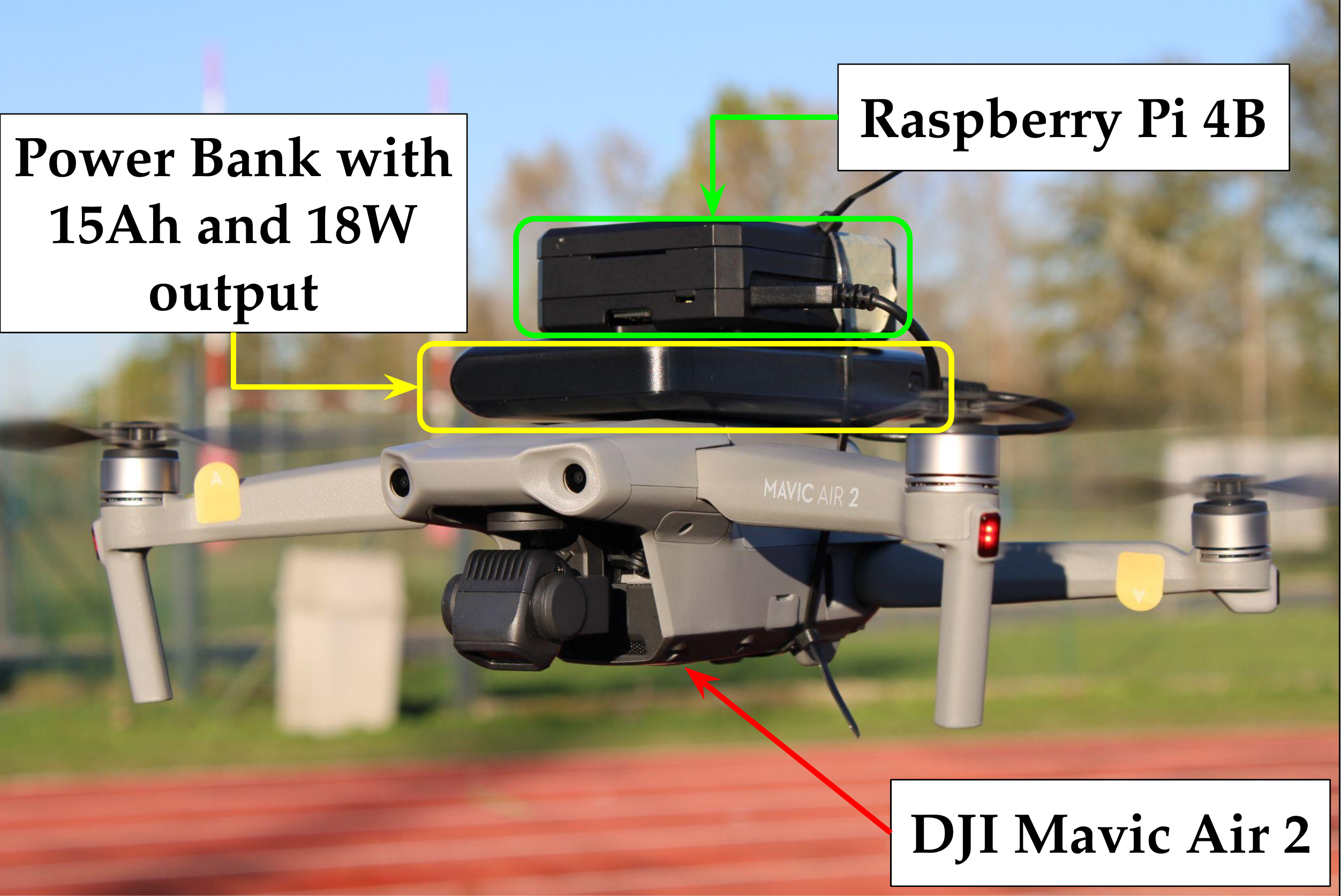}
\caption{\small UAV implementation.}
\label{fig:testbed}
\end{minipage}
\end{figure} 

The edge server is a Jetson Nano board, used to run the \gls{dl} models and execute tasks. It mounts a Raspberry PI for computation and communication. TCP links are established for reliable connectivity. Considering the limited capabilities of the edge server, we execute only ResNet-50 and MobileNet-V2 on the Jetson Nano, which have approximately 0.03 seconds of inference time \cite{nvidiainference2021}. 
For DensNet-201, ResNet-152, and YoloV4, we used a laptop with an NVIDIA RTX-2060 \gls{gpu}. In the experiments, we consider up to 4 targets placed at around 15 meters from the edge. Table \ref{tab:exp_cond} reports the experimental settings in the \textit{Urban} scenario. 
%
%
\begin{figure}[t] 
  \centering
  \begin{minipage}{0.48\linewidth}
  \includegraphics[width=\textwidth]{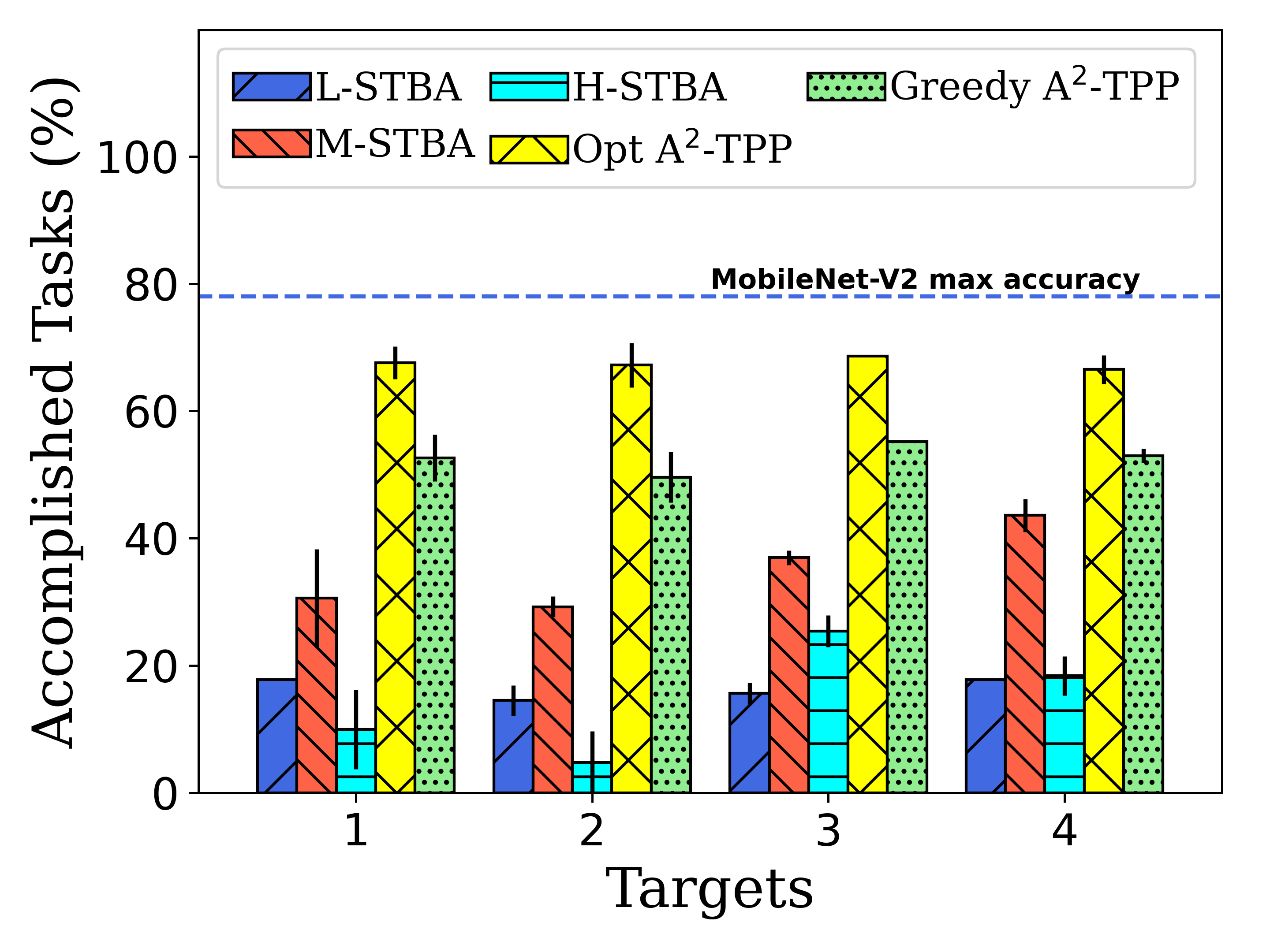}
  \caption{\small Accomplished Tasks (\%) at increasing targets, MobileNet-V2, $\Delta=0.4$sec}
    \label{fig:star_real_exp_sources_task_vs_targets}
  \end{minipage}  
  \hfill
  \centering
  \begin{minipage}{0.48\linewidth}
  \includegraphics[width=\textwidth]{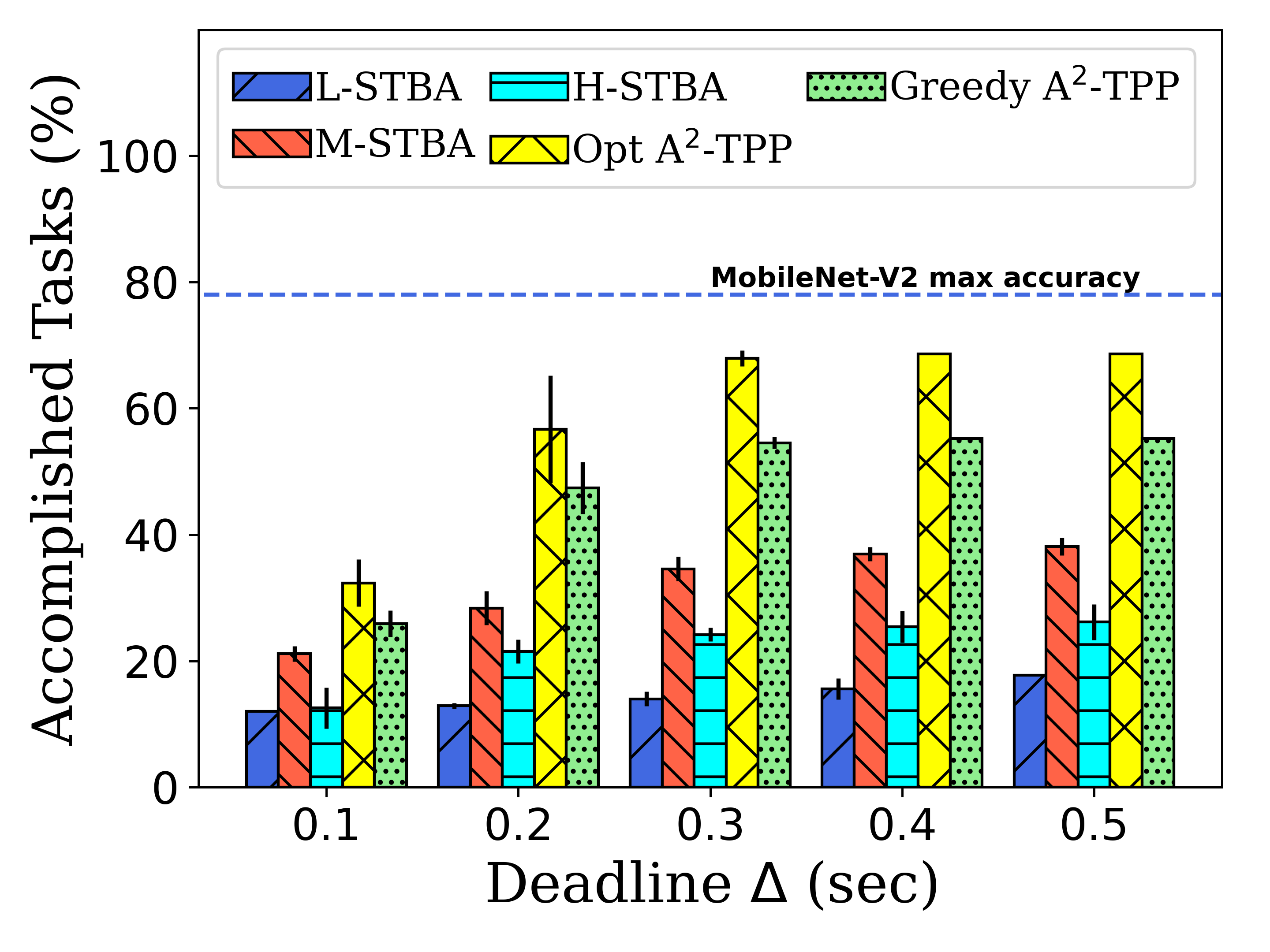}
  \caption{\small{Accomplished Tasks (\%) at increasing of $\Delta$, using 3 targets and MobileNet-V2}}
\label{fig:perc_task_increasing_delta_star_exp}
  \end{minipage}
\end{figure}  
The first set of experiments evaluates the impact of increasing the number of targets (from 1 to 4), with MobileNet-V2.  
Figure \ref{fig:star_real_exp_sources_task_vs_targets} shows the percentage of accomplished tasks at the edge-server with a task deadline of $\Delta = 0.4$sec. 
The plot shows that the \optprob finds the best trade-off between accuracy and data compression. It completes more than 67\% of the tasks, independently of the number of targets. This is close to the maximum performance achievable with the \gls{dl} model (i.e., 78$\%$), represented by the blue horizontal line. \textbf{\greedyprob instead reaches up to 57\% accomplished task, with a 20\% average improvement over the best STBA version} (M-STBA).
Conversely, the best \gls{stba} variant (i.e., M-\gls{stba}) does not complete more than 46\% of tasks, independently of the number of targets. In particular, with 2 targets, all \gls{stba} variants perform very poorly, completing less than 30\% of tasks. 
The superiority of \FW in both the approaches (Opt and Greedy) is confirmed by results on the percentage of accomplished tasks by varying the deadline $\Delta \in [0.1, 0.5]$ (see Figure \ref{fig:perc_task_increasing_delta_star_exp}). \texttt{Opt-}\gls{prob}
reaches an improvement over the percentage of executed tasks with respect to  M-\gls{stba} up to 76\% when $\Delta=0.3$sec.
The \greedyprob approach instead improves M-\gls{stba} results (when $\Delta$=0.3sec) around 50\% upholding our intuition.
We investigated the performance of \greedyprob and \optprob also when other \gls{dl} models are applied. Table \ref{tab:models_exp} summarizes the results in the case of $\Delta=0.3$sec and 4 targets, for ResNet50, MobileNet-V2 (executed on the Jetson Nano) and ResNet152, DenseNet201 and YoloV4 (executed on a laptop with a dedicated GPU). The results show that both our solutions outperforms all \gls{stba} variants independently of the applied model. In particular, with DenseNet201 \optprob has the best performance. 
\begin{table}[h]
\resizebox{\columnwidth}{!}{%
\begin{tabular}{|c|c|c|c|c|c|}
\hline
\gls{dl} Model          & \optprob  & \greedyprob & L-\gls{stba} & M-\gls{stba} & H-\gls{stba} \\ \hline
ResNet50     & \textbf{66.64} & 62.88 &25.42  & 36.14  & 21.86   \\ \hline
ResNet152    & \textbf{67.89} & 65.27 &29.93  & 37.96  & 24.70   \\ \hline
DenseNet201  & \textbf{70.17} & 68.33 &32.92  & 41.87  & 26.99   \\ \hline
MobileNet-v2 & \textbf{69.29} & 57.36 &18.37  & 38.94  & 21.91   \\ \hline
YoloV4         & \textbf{59.32} & 51.3  &15.22  & 31.52  & 17.18   \\ \hline
\end{tabular}
}
\caption{Percentage of Completed Tasks, $\Delta=0.3$sec}\label{tab:models_exp}
\end{table}

\section{Conclusions}\label{sec:conclusion}
In this paper we proposed {\FW}, a novel \textit{application-aware} optimization framework for reliable and effective \gls{dl} task offloading in multi-hop UAVs networks. 
%
For the first time, we considered the \textit{accuracy} and \textit{delay} requirements of the specific \gls{dnn} tasks, to jointly optimize task assignment and offloading.
Through extensive simulation, we demonstrated that \FW is able to deal with different network conditions, maximizing the application performance at the edge. \FW outperforms existing approaches, getting and average improvement w.r.t. the state-of-the-art algorithm of 38\%.
We finally validated our solution through real-field experiments, considering four DJI Mavic Air 2 \Us and a Jetson Nano board as edge server.  We share datasets and code with the research community to allow reproducibility.

\newpage
\section{Acknowledgement of Support and Disclaimer} 
This work is funded in part by the G5828 "SeaSec: DroNets for Maritime Border and Port Security" project under the NATO’s Science for Peace Programme, by the National Science Foundation (NSF) grant CNS-2134973 and CNS-2120447, as well as by an effort sponsored by the U.S. Government under Other Transaction number FA8750-21-9-9000 between SOSSEC, Inc. and the Government. The U.S. Government is  authorized to reproduce and distribute reprints for Governmental purposes notwithstanding any copyright  notation thereon. The views and conclusions contained herein are those of the authors and should not be interpreted as necessarily representing the official policies or endorsements, either expressed or implied, of the Air Force Research Laboratory, \mbox{the U.S. Government, or SOSSEC, Inc.} 

\balance
\small
\bibliographystyle{IEEEtran}
\bibliography{IEEEabrv,biblio}

\begin{thebibliography}{10}
\providecommand{\url}[1]{#1}
\csname url@samestyle\endcsname
\providecommand{\newblock}{\relax}
\providecommand{\bibinfo}[2]{#2}
\providecommand{\BIBentrySTDinterwordspacing}{\spaceskip=0pt\relax}
\providecommand{\BIBentryALTinterwordstretchfactor}{4}
\providecommand{\BIBentryALTinterwordspacing}{\spaceskip=\fontdimen2\font plus
\BIBentryALTinterwordstretchfactor\fontdimen3\font minus
  \fontdimen4\font\relax}
\providecommand{\BIBforeignlanguage}[2]{{%
\expandafter\ifx\csname l@#1\endcsname\relax
\typeout{** WARNING: IEEEtran.bst: No hyphenation pattern has been}%
\typeout{** loaded for the language `#1'. Using the pattern for}%
\typeout{** the default language instead.}%
\else
\language=\csname l@#1\endcsname
\fi
#2}}
\providecommand{\BIBdecl}{\relax}
\BIBdecl

\bibitem{he2016deep}
K.~He, X.~Zhang, S.~Ren, and J.~Sun, ``Deep residual learning for image
  recognition,'' in \emph{Proceedings of the IEEE conference on computer vision
  and pattern recognition}, 2016, pp. 770--778.

\bibitem{huang2017densely}
G.~Huang, Z.~Liu, L.~Van Der~Maaten, and K.~Q. Weinberger, ``Densely connected
  convolutional networks,'' in \emph{Proceedings of the IEEE conference on
  computer vision and pattern recognition}, 2017, pp. 4700--4708.

\bibitem{messous2019game}
M.-A. Messous, S.-M. Senouci, H.~Sedjelmaci, and S.~Cherkaoui, ``{A game theory
  based efficient computation offloading in an UAV network},'' \emph{IEEE
  Transactions on Vehicular Technology}, vol.~68, no.~5, pp. 4964--4974, 2019.

\bibitem{scherer2020multi}
J.~Scherer and B.~Rinner, ``Multi-robot persistent surveillance with
  connectivity constraints,'' \emph{IEEE Access}, vol.~8, pp. 15\,093--15\,109,
  2020.

\bibitem{bartolini2021connected}
N.~Bartolini, A.~Coletta, M.~Prata, and C.~Serino, ``On connected deployment of
  delay-critical fanets,'' in \emph{IEEE/RSJ International Conference on
  Intelligent Robots and Systems (IROS)}.\hskip 1em plus 0.5em minus
  0.4em\relax IEEE, 2021, pp. 9720--9727.

\bibitem{samir2020age}
M.~Samir, C.~Assi, S.~Sharafeddine, D.~Ebrahimi, and A.~Ghrayeb, ``{Age of
  information aware trajectory planning of UAVs in intelligent transportation
  systems: A deep learning approach},'' \emph{IEEE Transactions on Vehicular
  Technology}, vol.~69, no.~11, pp. 12\,382--12\,395, 2020.

\bibitem{chen2018multiple}
Y.~Chen, N.~Zhao, Z.~Ding, and M.-S. Alouini, ``{Multiple UAVs as relays:
  Multi-hop single link versus multiple dual-hop links},'' \emph{IEEE
  Transactions on Wireless Communications}, vol.~17, no.~9, pp. 6348--6359,
  2018.

\bibitem{wu2018joint}
Q.~Wu, Y.~Zeng, and R.~Zhang, ``{Joint trajectory and communication design for
  multi-UAV enabled wireless networks},'' \emph{IEEE Transactions on Wireless
  Communications}, vol.~17, no.~3, pp. 2109--2121, 2018.

\bibitem{hosseinalipour2020interference}
S.~Hosseinalipour, A.~Rahmati, and H.~Dai, ``{Interference avoidance position
  planning in dual-hop and multi-hop UAV relay networks},'' \emph{IEEE
  Transactions on Wireless Communications}, vol.~19, no.~11, pp. 7033--7048,
  2020.

\bibitem{chuprov2022degrading}
S.~Chuprov, L.~Reznik, A.~Obeid, and S.~Shetty, ``How degrading network
  conditions influence machine learning end systems performance?'' in
  \emph{IEEE INFOCOM 2022-IEEE Conference on Computer Communications Workshops
  (INFOCOM WKSHPS)}.\hskip 1em plus 0.5em minus 0.4em\relax IEEE, 2022, pp.
  1--6.

\bibitem{yang2020offloading}
B.~Yang, X.~Cao, C.~Yuen, and L.~Qian, ``{Offloading Optimization in Edge
  Computing for Deep Learning Enabled Target Tracking by Internet-of-UAVs},''
  \emph{IEEE Internet of Things Journal}, vol.~8, no.~12, pp. 9878--9893, 2020.

\bibitem{callegaro2021seremas}
D.~Callegaro, M.~Levorato, and F.~Restuccia, ``{SeReMAS: Self-Resilient Mobile
  AutonomousSystems Through Predictive Edge Computing},'' \emph{arXiv preprint
  arXiv:2105.15105}, 2021.

\bibitem{chen2020vfc}
W.~Chen, Z.~Su, Q.~Xu, T.~H. Luan, and R.~Li, ``{VFC}-based cooperative {UAV}
  computation task offloading for post-disaster rescue,'' in \emph{IEEE
  International Conference on Computer Communications (INFOCOM)}.\hskip 1em
  plus 0.5em minus 0.4em\relax IEEE, 2020, pp. 228--236.

\bibitem{deng2009imagenet}
J.~Deng, W.~Dong, R.~Socher, L.-J. Li, K.~Li, and L.~Fei-Fei, ``{Imagenet: A
  large-scale hierarchical image database},'' in \emph{2009 IEEE conference on
  computer vision and pattern recognition}.\hskip 1em plus 0.5em minus
  0.4em\relax IEEE, 2009, pp. 248--255.

\bibitem{bovik2010handbook}
A.~C. Bovik, \emph{Handbook of image and video processing}.\hskip 1em plus
  0.5em minus 0.4em\relax Academic press, 2010.

\bibitem{mobilenetv2}
M.~Sandler, A.~G. Howard, M.~Zhu, A.~Zhmoginov, and L.~Chen, ``Inverted
  residuals and linear bottlenecks: Mobile networks for classification,
  detection and segmentation,'' \emph{CoRR}, vol. abs/1801.04381, 2018.

\bibitem{Yolov4}
A.~Bochkovskiy, C.~Wang, and H.~M. Liao, ``Yolov4: Optimal speed and accuracy
  of object detection,'' \emph{CoRR}, vol. abs/2004.10934, 2020.

\bibitem{bertizzolo2020swarmcontrol}
L.~Bertizzolo, S.~D’oro, L.~Ferranti, L.~Bonati, E.~Demirors, Z.~Guan,
  T.~Melodia, and S.~Pudlewski, ``Swarmcontrol: An automated distributed
  control framework for self-optimizing drone networks,'' in \emph{IEEE
  International Conference on Computer Communications (INFOCOM)}.\hskip 1em
  plus 0.5em minus 0.4em\relax IEEE, 2020, pp. 1768--1777.

\bibitem{rashid2020socialdrone}
M.~T. Rashid, D.~Y. Zhang, and D.~Wang, ``Socialdrone: An integrated social
  media and drone sensing system for reliable disaster response,'' in
  \emph{IEEE International Conference on Computer Communications
  (INFOCOM)}.\hskip 1em plus 0.5em minus 0.4em\relax IEEE, 2020, pp. 218--227.

\bibitem{bartolini2020multi}
N.~Bartolini, A.~Coletta, G.~Maselli \emph{et~al.}, ``A multi-trip task
  assignment for early target inspection in squads of aerial drones,''
  \emph{IEEE Transactions on Mobile Computing}, vol.~20, no.~11, pp.
  3099--3116, 2021.

\bibitem{wang2019dynamic}
X.~Wang and L.~Duan, ``Dynamic pricing and capacity allocation of
  {UAV}-provided mobile services,'' in \emph{IEEE International Conference on
  Computer Communications (INFOCOM)}.\hskip 1em plus 0.5em minus 0.4em\relax
  IEEE, 2019, pp. 1855--1863.

\bibitem{kimura2020distributed}
T.~Kimura and M.~Ogura, ``Distributed collaborative 3d-deployment of {UAV} base
  stations for on-demand coverage,'' in \emph{IEEE International Conference on
  Computer Communications (INFOCOM)}.\hskip 1em plus 0.5em minus 0.4em\relax
  IEEE, 2020, pp. 1748--1757.

\bibitem{natalizio2019take}
E.~Natalizio, N.~R. Zema, E.~Yanmaz, L.~D.~P. Pugliese, and F.~Guerriero,
  ``Take the field from your smartphone: Leveraging {UAVs} for event filming,''
  \emph{IEEE Transactions on Mobile Computing}, vol.~19, no.~8, pp. 1971--1983,
  2019.

\bibitem{tateo2018multiagent}
D.~Tateo, J.~Banfi, A.~Riva, F.~Amigoni, and A.~Bonarini, ``Multiagent
  connected path planning: {PSPACE-Completeness} and how to deal with it,'' in
  \emph{Thirty-Second AAAI Conference on Artificial Intelligence}, 2018.

\bibitem{nguyen2019new}
T.~N. Nguyen, B.-H. Liu, and S.-Y. Wang, ``On new approaches of maximum
  weighted target coverage and sensor connectivity: Hardness and
  approximation,'' \emph{IEEE Transactions on Network Science and Engineering},
  vol.~7, no.~3, pp. 1736--1751, 2019.

\bibitem{lin1999steiner}
G.-H. Lin and G.~Xue, ``Steiner tree problem with minimum number of steiner
  points and bounded edge-length,'' \emph{Information Processing Letters},
  vol.~69, no.~2, pp. 53--57, 1999.

\bibitem{senel2011relay}
F.~Senel and M.~Younis, ``Relay node placement in structurally damaged wireless
  sensor networks via triangular steiner tree approximation,'' \emph{Computer
  Communications}, vol.~34, no.~16, pp. 1932--1941, 2011.

\bibitem{nsnam}
\BIBentryALTinterwordspacing
nsnam, ``{Network Simulator-3 (NS-3)},'' 2021. [Online]. Available:
  \url{http://www.nsnam.org/}
\BIBentrySTDinterwordspacing

\bibitem{russakovsky2015imagenet}
O.~Russakovsky, J.~Deng, H.~Su, J.~Krause, S.~Satheesh, S.~Ma, Z.~Huang,
  A.~Karpathy, A.~Khosla, M.~Bernstein \emph{et~al.}, ``{Imagenet large scale
  visual recognition challenge},'' \emph{International journal of computer
  vision}, vol. 115, no.~3, pp. 211--252, 2015.

\bibitem{nvidiainference2021}
\BIBentryALTinterwordspacing
Nvidia, ``Jetson nano: Deep learning inference benchmarks,'' 2021. [Online].
  Available:
  \url{https://developer.nvidia.com/embedded/jetson-nano-dl-inference-benchmarks}
\BIBentrySTDinterwordspacing

\end{thebibliography}

\end{document}